\newtheorem{assum}{\textit{Assumption}}
\newtheorem{prop}{\textit{Proposition}}
\newtheorem{corollaire}{\textit{Corollary}}
\newcommand{\ac}{\color{red}}   %% Begin
\newcommand{\ca}{\rm \color{black}} %% end
\newcommand{\cutac}[1]{\ac [cut] \ca}            % to mark cut text
\date{10 November 2013}
\newcommand{\by}{\mathbf{y}}
\newcommand{\bY}{\mathbf{Y}}
\newcommand{\R}{\mathbb{R}}
\newcommand{\M}{\mathcal{M}}
\newtheorem{Rem}{Remark}
\begin{document}

%\articletype{Regular}

\title{Bayesian Analysis of ODE's: solver optimal accuracy and Bayes factors}

\author{
Marcos A. Capistr\'an$^{\rm a}$\thanks{Email: \textit{marcos@cimat.mx} \vspace{6pt}},
J. Andr\'es Christen$^{\rm a}$\thanks{Corresponding author. Email: \textit{jac@cimat.mx} \vspace{6pt}} and
Sophie Donnet$^{\rm b}$\thanks{Email: \textit{sophie.donnet@agroparistech.fr} \vspace{6pt}} \\
\vspace{6pt} $^{a}${\em Centro de Investigaci\'on en Matem\'aticas, Guanajuato, M\'exico};\\
\vspace{6pt} $^{b}${\em INRA, Unit\'e MIA, Equipe MORSE, France.}}

%$^\dagger$
%$^{\ast}$
%$^\ddagger$ 

\maketitle

\begin{abstract}
In most relevant cases in the Bayesian analysis of ODE inverse problems, 
a numerical solver needs to be used.  Therefore, we cannot work with the exact 
theoretical posterior distribution but only with an approximate posterior deriving  from the error in
the numerical solver.  To compare a numerical and the theoretical posterior distributions we propose to use Bayes Factors (BF), considering both of them as models for the data at hand.  We prove that
the theoretical vs a numerical posterior BF tends to 1, in the same order
(of the step size used) as the numerical forward map solver does.  For higher order solvers
(eg. Runge-Kutta) the Bayes Factor is already nearly 1 for step sizes that would take far less
computational effort.  Considerable CPU time may be saved by using coarser solvers
that nevertheless produce practically error free posteriors.  Two examples are presented
where nearly 90\% CPU time is saved while all inference results are identical to using a solver with
a much finer time step.         
\end{abstract}

%\begin{keywords}
KEYWORD: Inverse Problems; Bayesian Inference; Bayes Factors; Numerical Analysis of ODE's.
%\end{keywords}

%\begin{classcode}
%34-04; 62F15;  65L99. %Ordinary differential equations-Explicit machine computation and programs; Bayesian Inference; Numerical Analisys of ODE non of the above
%\end{classcode}

\section{Introduction}\label{sec:intro}

\subsection{Context and issues}

In a comprehensive review of recent publications on the Bayesian Analysis of Inverse Problems
it is clear that there is a consistent growth of interest in the uncertainty quantification approach provided
by the Bayesian paradigm.  However, we also notice that some of the potential strength of the Bayesian approach 
is currently underexploited, namely, 1) prediction, 2) model selection and (let alone) 3) decision making
under uncertainty.  Recent reviews on the Bayesian Analysis of inverse problems
\citep{MOHAMMAD-D2006, KAIPIO&FOX2011, WATZENIG&FOX2009, KAIPIO&FOX2011, WOODBURY2011}
do not discuss these former points in detail.  We have now access to important
theoretical results on the definition of posterior distributions in infinite dimensions and regularity
conditions for correct approximate schemes through numerical, finite dimension posteriors
\citep[eg.][]{SCHWAB&STUART2012} that provide a sound theoretical
background to the field.  There are several applications
in the (Bayesian dominated) field of image processing of various kinds
\citep[][to mention some recent references]{ZHUetAl2011, CAIetAl2011, FALLetAl2011, CHAMAetAl2012, KOLEHMAINENetAl2007, NISSINENetAl2011, KOZAWAetAl2012}, and also we find a whole range of
emerging application areas in the Bayesian Analysis of inverse problems
\citep{CALVETTIetAl2006b, KEATSetAl2010, CUIetAl2011, WAN&ZABARAS2011, HAZELTON2010, KAIPIO&FOX2011}.
However, only a handful of publications mentions or uses Bayesian predicting tools, that is, the posterior (predictive)
distribution of yet to observe variables
\citep{VEHTARI&LAMPINEN2000, SOMERSALOetAl2003, KAIPIO&FOX2011, CAPISTRANetAl2012}, and even less consider formally the model selection and model comparison tools
developed in Bayesian statistics.  We intend to contribute to the formal and more systematic use
of the latter in the context of Inverse Problems. 

\medskip
 Predictive power is always a desirable property of mathematical models and inference, 
beyond parameter estimation, for model parameters that may or may not have straightforward physical meaning.
We believe that comparing forward models as \textit{statistical} models is the way to proceed when predictive power is of main
interest.  The Bayesian model comparison and model averaging tools, in particular pairwise model comparison using
Bayes Factors, is in such case the main tool to be used in this context, as far as predictive power is concerned \citep{HOETINGetAl1999}.
In particular, this idea should be used when analyzing the
numerical vs. the theoretical versions of the resulting posterior distribution.  That is,
the forward map, defined as the solution of a system of ODE's (or PDE's), represents a
complex regressor that is only theoretically defined.  The actual usage of the model necessarily
involves a numerical solver that includes an approximation error depending on the solver step size $h$.  Therefore,
there is a theoretical posterior distribution $P_{\Theta | Y}( \theta | y)$ and the approximate
$P_{\Theta | Y}^h( \theta | y)$ posterior.
Recently a series of papers \citep[eg.][]{SCHWAB&STUART2012}
discuss regularity conditions under which the latter
tends to the former as the approximation error tends to zero, using a suitable metric.
A metric comparison (ie. $|| P_{\Theta | Y}( \cdot | y) -  P_{\Theta | Y}^h( \cdot | y) ||$) is
useful to proving the required convergence theorems, but more practical considerations
will be needed when evaluating the relative benefits of a numerical approach with
a particular solver step size $h$ (for data $y$).

\medskip
 We believe that both
$P_{\Theta | Y}( \cdot | y)$ and $P_{\Theta | Y}^h( \cdot | y)$ (and $P_{\Theta | Y}^{h_i}( \cdot | y)$
for any other $h_i$) should be compared as \textit{models}, being $P_{\Theta | Y}( \cdot | y)$
the reference but computationally very expensive or even only theoretically available model
and the approximate, for various solver precisions $h_1, h_2, \ldots$, as alternative,
less computationally demanding computing models.  Bayes factors may then be used to establish
a sound comparison, to balance predictive power on the one hand vs. solver CPU time on the other,
to establish a useful solver precision (that may even be far less precise than common usage but
achieve comparable results relative to the observations errors, the model characteristics etc. for
the problem at hand).  One difficulty here is that in most real applications the theoretical model
is unavailable.  We therefore establish how to approximate the Bayes factors, without having
the theoretical reference model, using solely the numerical solver approximation rates.

\subsection{Notation}  

Assume that we observe a process $\by = (y_1,\dots,y_n)$ at the discrete times
$t_1, \dots, t_n \in [0,T[ ^n$ such that
\begin{equation}\label{eqn.obsmodel}
y_i  = f(X_{\theta}(t_i)) + \varepsilon_i ,~~ \varepsilon_i \sim_{i.i.d.} \mathcal{N}(0,\sigma^2) \quad \quad \quad \quad  (\M)
\end{equation}
where $X_{\theta}$ is the solution of the following system of ordinary differential equations, namely the  Forward Model,
\begin{equation}\label{eqn.odemodel}
\frac{dX_{\theta}}{dt} = F( X_{\theta}, t, \theta ); ~~ X_{\theta}(t_0) = X_0. 
\end{equation}
$\theta \in \Theta \subset \R^d$ is a vector of unknown parameters.
$F: \R^p \times [0, T[ \times \Theta \mapsto \R^p$ is a known function,
whose regularity properties  ensure the existence of a unique solution of the 
initial value problem  (\ref{eqn.odemodel}) (the regularity conditions of $F$ will be discussed in Section~\ref{sec:solvers}). 

\medskip
 $f : \R^p \rightarrow \R^k$ in (\ref{eqn.obsmodel}) is the observation function.
Many types of observation functions $f$ can be considered, modeling for instance the observation of a single component of the $p$-vector $X_{\theta}(t)$ or a (linear) combination of the components.  In this paper, for the sake of simplicity, we consider  a one dimensional observations
problem only, that is $k=1$.
Generalizations of our results to multivariate observations are possible and will be briefly mentioned in the Section~\ref{sec:discussion}. 

\medskip
 Any statistical decision  from the data $\by$ --such as  estimation, prediction or model
selection-- relies on the likelihood function (in a Bayesian and other paradigms)
\begin{equation}\label{eqn.lik}
P_{ \bY | \phi} (\by | \theta,\sigma) = \sigma^{-n}  (2\pi)^{-n/2} \exp\left\{-\frac{1}{2 \sigma^2} \sum_{i=1}^n (y_i -f(X_{\theta}(t_i)))^2  \right\}
\end{equation}
whose  expression involves the computation of  $X_{\theta}$, a solution of (\ref{eqn.odemodel}).
However, except in very simple cases, an explicit expression of the solution is in general not available (although its existence
is ensured by regularity assumptions on
$F$).  As a consequence, in practice, the system (\ref{eqn.odemodel})  is solved using a numerical solver and the inference is
performed, not on the previous ``exact" model but on an approximate model, namely
\begin{equation}\label{eqn.obsmodel.approx}
y_i  = f(X^h_{\theta}(t_i)) + \varepsilon_i ,~~ \varepsilon_i \sim_{i.i.d.} \mathcal{N}(0,\sigma^2)\quad \quad \quad \quad  (\M_h)
\end{equation}\label{eqn.lik.approx}
where $X_{\theta}^h$ denotes the approximate solution of (\ref{eqn.odemodel}) supplied by the numerical solver ($h$ being a precision parameter of the solver).
The new likelihood derived from model $\M_h$ is thus
\begin{equation}
P^h_{ \bY | \phi} (\by | \theta,\sigma) = \sigma^{-n}  (2\pi)^{-n/2} \exp\left\{-\frac{1}{2 \sigma^2} \sum_{i=1}^n (y_i -f(X^h_{\theta}(t_i)))^2  \right\} .
\end{equation}

\medskip
 Changing from the original statistical model is obviously not without consequences.
Since  there is no alternative but to use the approximate model
above (\ref{eqn.obsmodel.approx}),  there exists a real need in understanding and controlling  related consequences. In a Bayesian paradigm, any decision is
based on the posterior distribution.  A first natural choice is to compare the posterior distributions calculated from models
$(\M)$ and $(\M_h)$.  Such a study has been proposed by, for example, \citet{DONNETetAL2007}. 
However, when comparing models in a Bayesian context,
the natural tools are Bayes Factors. In this work, we recall their importance, propose an efficient way to compute them in this context
and study some theoretical aspects of their calculation when the exact model is not available. 

\medskip
 The paper is organized as follows. In Section \ref{sec:solvers} we discuss the choice of a solver and its required properties from a Bayesian Inverse Problem point of view. Bayesian inference is presented in Section \ref{sec:posterior} and some control results on the posterior distributions are cited in  Subsection \ref{subsec:control}.  Bayes factors are introduced and our main theoretical results are developed in Section  \ref{sec:bayes factor} . 
Our results are illustrated both with a simulation study and with real data in Section \ref{sec:numerical examples}.
Finally, a discussion of the paper is presented in Section \ref{sec:discussion}.

\section{ODE solvers from a Bayesian Inverse problem point of view}\label{sec:solvers}

Bayesian analysis for inverse problems strongly relies on the numerical approximation of the underlying ODE system. One can choose to use a standard (more or less advanced) implemented solver as a black-box in a sense assuming that no approximation is made on the model. But in our approach, we aim at understanding the influence of this approximation. As a consequence, we are interested in the inherent properties of the numerical solver. When it comes to qualifing a numerical solver, three properties arise, namely its error (local or global), its stability and its stiffness. The three of them are addressed in the following section although global error is surely  the most important one
as far as the purposes of the paper are concern. 

\medskip
 Only a restricted number of nonlinear ODEs has a solution in closed form.
Consequently, there is a plethora of numerical methods to solve the initial value 
problem~(\ref{eqn.odemodel}). Noteworthy are time-stepping methods based on Taylor 
approximation of the function $F$, multi-step methods and Runge-Kutta methods. 
These methods span through many orders of local accuracy. Beside the order of 
accuracy, standard requirements for a numerical method are consistency, convergence, and stability \citep{iserles1996first,quarteroni2007numerical}. However, even when these latter conditions hold, a common concern in the 
numerical solution of the initial value problem~(\ref{eqn.odemodel}) is error control. 

\medskip
 Two types of error can be considered, namely the local and the global errors.
In the following, we use the theory developed in \citet{quarteroni2007numerical}
to discuss the relationship between local and global errors for one-step Euler and 
fourth order explicit Runge-Kutta methods. 

\medskip
 Let  $h$ be the step size of the method.  We define a time grid as $ t_{n+1} = t_n + h$, for some fixed $h >0$.
Let $X_{\theta,n}$ be the solver approximation of $X_{\theta} (t_n)$.
One-step Euler and fourth order explicit Runge-Kutta methods have the form
\begin{equation}
\label{eq:euler}
X_{\theta,n+1} = X_{\theta,n} + hK(t_{n},X_{\theta,n},h,F),
\end{equation}
where 
\begin{equation}
K(t_{n},X_{\theta,n},h,F) = 
\left\{
\begin{array}{ll}
%F( X_{\theta,n}, t_{n}, \theta) & \mbox{for Euler}\\
K_{1} & \mbox{for Euler}\\
\frac{1}{6}(K_{1}+2K_{2}+2K_{3}+K_{4}) & \mbox{for 4th ord. Runge-Kutta,}
\end{array}\right.
\end{equation}
with  
$$\begin{array}{cclccl}
K_{1}&=&F( X_{\theta,n}, t_{n}, \theta) & K_{2}&=&F( X_{\theta} + \frac{1}{2}hK_{1}, t_{n}+\frac{1}{2}h, \theta)\\
K_{3}&=&F( X_{\theta} + \frac{1}{2}hK_{2}, t_{n}+\frac{1}{2}h, \theta)&K_{4}&=&F( X_{\theta} + hK_{1}, t_{n} + h, \theta) .
\end{array}
$$
(There is also a 2nd order Rungue-Kutta, which will be used in Section~\ref{sec:exa1}, but we avoid presenting its details.)

\medskip
 Local truncation error $e_{n}$ is the error made in one step of the numerical method, that is
$$
e_n = e_{h}(t_n,\theta)=||X_{\theta}(t_{n})- X_{\theta}(t_{n-1}) - hK(t_{n},X_{\theta}(t_{n-1}),h,F)||_{2} .
$$
Global error $E_{n}$ is the difference between the computed solution and the 
true solution at any given value of  $t$ belonging to the grid
$$
E_n = E_{h}(t_n,\theta)=||X_{\theta}(t_{n})-X_{\theta,n}||_{2} .
$$
For (order one) Euler and fourth order Runge-Kutta methods $e_{n}$ is $O(h^{2})$ and $O(h^{5})$ 
respectively (we avoid the details for order two Rungue-Kutta, for which indeed $e_n$ is $O(h^{3})$).
Proposition~\ref{prop:error} establishes a relationship between global 
and local error for one-step numerical methods like Euler and Runge-Kutta.

\begin{assum}\label{assum}
We assume throughout the paper that the function $F$ in the right hand side 
of the initial value 
problem~(\ref{eqn.odemodel}) is continuously differentiable with respect 
to $X_{\theta}$ on a parallelepiped $\mathcal{R}= \{:t_{0}\leq t\leq t_{0}+a$, $||X_{\theta}-X_{0}||_{2}\leq b\}$ for each $\theta$. Hence a unique solution of the initial value problem exists in a neighborhood of 
$(t_{0},X_{0})$ for each $\theta$. $F$ as above implies $K$ is Lipschitz continuous 
on $X_{\theta}$, i.e. there is $L\in\mathbb{R}^{+*}$ such that if $(t,x)$, $(t,y)\in R$, then
$$
||K(t,x,h,F)-K(t,y,h,F)||_{2}\leq L||x-y||_{2} .
$$
\end{assum}

\begin{prop}\label{prop:error}
Assume that \emph{Assumption~\ref{assum}} holds  and for the solver at hand and
the local truncation error $e_{n}$ is $O(h^{p+1})$.  Then the global truncation error is $O(h^p)$.
\end{prop}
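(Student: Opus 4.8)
The plan is to derive a one-step recursive inequality for the global error $E_n$ and then resolve it by a discrete Gr\"onwall-type argument. First I would place the exact trajectory and the numerical one side by side over a single step. By shifting the index in~(\ref{eq:euler}), the numerical update reads $X_{\theta,n} = X_{\theta,n-1} + hK(t_{n-1}, X_{\theta,n-1}, h, F)$, while the exact increment can be written as $X_{\theta}(t_n) = X_{\theta}(t_{n-1}) + hK(t_{n-1}, X_{\theta}(t_{n-1}), h, F) + \delta_n$, where by the definition of the local truncation error $\|\delta_n\|_2 = e_n$. Subtracting the two relations gives
$$X_{\theta}(t_n) - X_{\theta,n} = \big(X_{\theta}(t_{n-1}) - X_{\theta,n-1}\big) + h\big(K(t_{n-1}, X_{\theta}(t_{n-1}), h, F) - K(t_{n-1}, X_{\theta,n-1}, h, F)\big) + \delta_n.$$

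Next I would take the Euclidean norm, apply the triangle inequality, and invoke the Lipschitz bound on $K$ furnished by Assumption~\ref{assum}, namely $\|K(t,x,h,F) - K(t,y,h,F)\|_2 \leq L\|x-y\|_2$. This controls the middle term by $hL\,E_{n-1}$ and produces the scalar recursion
$$E_n \leq (1 + hL)\,E_{n-1} + e_n.$$
Since the hypothesis gives $e_n = O(h^{p+1})$, there is a constant $C$, uniform over the grid, with $e_n \leq C h^{p+1}$. I would then solve this linear recurrence starting from $E_0 = 0$ (the initial condition is exact) and sum the geometric series,
$$E_n \leq C h^{p+1} \sum_{j=0}^{n-1}(1 + hL)^j = \frac{C h^p}{L}\big[(1+hL)^n - 1\big].$$
Finally, on the fixed horizon $[0,T[$ one has $nh \leq T$, so the elementary estimate $(1+hL)^n \leq e^{nhL} \leq e^{LT}$ yields $E_n \leq \tfrac{C}{L}(e^{LT}-1)\,h^p$, which is exactly $O(h^p)$.

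The conceptual heart of the statement, and the step most worth flagging, is the loss of precisely one power of $h$: each local error is $O(h^{p+1})$, but reaching a fixed time requires $n \sim T/h$ steps, so the accumulated error degrades by one order to $O(h^p)$. The only genuine care required is uniformity, that the constant $C$ be independent of $n$, and that the exponential factor be bounded using the finiteness of $T$ together with Assumption~\ref{assum} guaranteeing the solution (and hence the Lipschitz estimate) remains valid on the whole parallelepiped. Beyond this standard accumulation argument I do not anticipate any obstacle.
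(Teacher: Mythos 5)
Your proof is correct and follows essentially the same route as the paper's: both derive the one-step recursion $E_n \leq (1+hL)E_{n-1} + e_n$ from the Lipschitz property of $K$ (the paper via an add-and-subtract of $X_{\theta}(t_{n-1}) + hK(t_{n-1}, X_{\theta}(t_{n-1}), h, F)$, you via directly subtracting the exact and numerical one-step updates, which is the same computation), then both unroll the recurrence, sum the geometric series to get $\frac{C}{L}h^p\bigl[(1+hL)^n - 1\bigr]$, and bound $(1+hL)^n$ by $e^{LT}$ on the fixed time horizon. No substantive difference to report.
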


\begin{proof}
We have
$$
E_{n} =  ||X_{\theta}(t_{n}) - X_{\theta,n}||_{2} = ||X_{\theta}(t_{n}) - X_{\theta,n-1} -  h K(t_{n},X_{\theta,n-1},h,F)||_{2} .
$$
Adding and susbtracting the term $X_{\theta}(t_{n-1}) + h K( t_{n}, X_{\theta}(t_{n-1}), h, F)$ we obtain
\begin{align*}
E_{n} &\leq  e_{n} +  || X_{\theta}(t_{n-1}) - X_{\theta,n-1}  ||_{2}  +  h || K( t_{n}, X_{\theta}(t_{n-1}), h, F) - K(t_{n},X_{\theta,n-1},h,F) ||_{2} \\
	  &\leq e_{n} + E_{n-1} + h L || X_{\theta}(t_{n-1}) - X_{\theta,n-1}  ||_{2} \\
	  &\leq e_{n} + E_{n-1}(1 + hL).
\end{align*}	  
Consequently
\begin{align*}
E_{n} &\leq e_{n} + e_{n-1}(1 + hL) + \ldots + e_{1}(1 + hL)^{n-1}\\
           &\leq H h^{p+1} \frac{1-(1 + hL)^n}{1-(1 + hL)} = \frac{H}{L} h^p \{ ( 1 + hL)^n - 1 \} .
\end{align*}	
Therefore, since $h = \frac{l}{n}$ ($l=t_n-t_1$),
$$
E_{n} \leq \frac{H}{L} h^p \left\{ \left( 1 + \frac{l L}{n} \right)^n - 1 \right\} \leq \frac{H}{L} e^{l  L}  h^p.
$$
\end{proof}

From Proposition \ref{prop:error}, we deduce that, for  any explicit one-step method of order $p$ such as Euler ($p=1$)
and Runge-Kutta  ($p=2$ or $p=4$) schemes, the global error is of order $O(h^p)$
for $h$ small enough.  If we set $X_{\theta}^h(t_n) = X_{\theta,n}$, note that we have proved that
$$
\max_{t \in \{t_1, t_2, \ldots , t_n\}} || X_{\theta}(t) - X_{\theta}^h(t) || \leq C_{\theta} h^p
$$
since $H$ and $L$ above depend on $\theta$.  This global error order control will be needed in Sections~\ref{subsec:control}
and~\ref{sec:comparing} to prove our main result.

\medskip
 In the numerical community,  the control of the error  means keeping the error $e_n$ or $E_n$ under a fixed level
(beyond the above mentioned asimptotic results), along the application of the scheme. 
The local error $e_n$ can be controlled directly, using for instance,  the ``Milne device'' \citep{iserles1996first}.
However, there are not known general methods to control global 
error $E_n$, although some methods exist to estimate it,  for instance, those 
relying on adjoint state analysis, see~\cite{cao2004posteriori,lang2007global}.  In the results that follow we
do not require an estimation of this global (or local) error, solely the knowledge of the global error order for the
solver at hand.

\medskip
 Another important issue in the numerical solution of problem~(\ref{eqn.odemodel}) is stiffness.
According to~\cite{lambert1991numerical}, if a numerical method with finite region
of absolute stability applied to system~(\ref{eqn.odemodel}) is forced to use in a 
certain interval in $t$ a steplength excessively small in relation to the 
smoothness of the exact solution in that interval, then system~(\ref{eqn.odemodel}) 
is said to be stiff in that interval. We remark that many systems of ODE modeling 
real life phenomena are stiff, see~\cite{gutenkunst2007universally}.

\medskip
 Software implementing time-stepping methods mentioned above must address
error control, stability and stiffness issues. Actually, local error control 
mechanisms can cope with stiffness at the expense of taking very small step-sizes.
Therefore, advanced features like variable order methods, and variable stepsize
methods have been developed and implemented in libraries of common high level
programing languages like R, Matlab and Python-Scipy. 

\medskip
 The results shown in this paper assume a fixed step method. % may be used for problem~(\ref{eqn.odemodel}).
We only work with the Euler and Rugue-Kutta methods (orders $p=1,2,4$ respectively).

\section{Bayesian inference for inverse problems: practical and theoretical aspects}\label{sec:posterior}

There are some excellent reviews concerning the Bayesian analysis of Inverse Problems
\citep{ KAIPIO&SOMERSALO2005, FOXetAl1999} and for a more detailed description these or other sources should be consulted.
In this section we only present the particular aspects of the field relevant to the development of our results.

\medskip
 Any  Bayesian statistical decision (such as estimation) is  based on the posterior distribution,
given by the Bayes formula 
\begin{equation}\label{eqn.post}
P^h_{ \Phi | \bY }( \theta,\sigma | \by ) =  \frac{P^h_{ \bY | \Phi }( \by |  \theta,\sigma) P_{\Phi}(\theta,\sigma)}{P^h_{\bY} (\by)}
\end{equation}
where $P_{\Phi}(\theta,\sigma)$ is the prior distribution on $(\theta,\sigma)$   and
$$
P^h_{\bY} (\by) = \int P^h_{ \bY | \Phi }( \by |  \theta,\sigma) P_{\Phi}(\theta,\sigma) d\theta d\sigma
$$
is the normalization constant,  also called the \emph{marginal likelihood} of data $\by$.

\medskip
 An estimator  of $(\theta,\sigma)$ is derived from some characteristics of the posterior distribution. Various approaches can be adopted.  
Either the estimator is taken as the mode of the posterior distribution, resulting into the MAP estimator
$(\hat{\theta}^{MAP},\hat{\sigma}^{MAP}) = \arg \max _{(\theta,\sigma) }P^h_{ \Phi| \bY }( \theta,\sigma | \by )$ 
or  the estimator derives from the minimization of a loss function $L(\theta,\sigma| \eta)$, that is
$ (\hat{\theta},\hat{\sigma})  = \arg \min _{(\theta,\sigma)} \int L(\theta,\sigma| \eta)P^h_{ \Phi | \bY }(\theta,\sigma | \by ) d\theta d\sigma $.
If the loss function $ L(\theta,\sigma| \eta)$ is equal to the $L^2$-distance, then the corresponding estimator is the posterior mean
$ (\hat{\theta}^{L^2},\hat{\sigma}^{L^2}) = \int (\theta,\sigma ) P^h_{\Phi| \bY }( \theta,\sigma  | \by ) d\theta d\sigma $.
 If the loss function $ L(\theta| \eta)$  is equal to the $L^1$-distance, then  the corresponding estimator is the posterior median, that is
 $ (\hat{\theta}^{L^1},\hat{\sigma}^{L^1}) =  (F^h)^{-1}_{ \Phi| \bY }\left( \frac{1}{2}\right)$, etc.  In general, posterior expectations of all types
 (ie. $\int g(\phi) P^h_{\Phi| \bY }( \phi  | \by ) d\phi$, for some measurable function $g$) are used to explore
 the posterior distribution, and these include all marginal distributions for individual parameters or posterior probabilities of
 specific regions of interest.

\medskip
 Besides some few simple conjugate models,  the normalizing constant $P^h_{\bY} (\by)$ has no explicit analytic expression and
therefore the above estimators can not be directly calculated. If the dimension
of $\Phi$ is 1 or 2 we could rely on numerical integration to obtain the normalizing constant $P_{\bY}^h (\by)$. In larger dimensions, the standard solution is to resort  to Monte Carlo methods to approximate the estimators.
Let  $(\theta^{(l)}, \sigma^{(l)})_{i=1\dots L}$ be a sample  from the posterior distribution, the $L^2$ estimator for instance, is approximated as
$(\hat{\theta}^{L^2},\hat{\sigma}^{L^2})  = \left( \frac{1}{L} \sum_{l=1}^M \theta^{(l)},   \frac{1}{L} \sum_{l=1}^L \sigma^{(l)}\right)$.

\medskip
 Simulation from the posterior distribution is not a direct task and Markov Chain Monte Carlo algorithms
\citep[see][for a didactic review]{Robert&Casella2004} are standard tools to sample from the posterior distribution $P^h_{ \Phi| \bY }( \theta,\sigma | \by )$. However,  such algorithms have to be carefully designed when the evaluation of the regression function of the model (and consequently of the likelihood function) is computationally intensive.  In the next section, we recall the basics of MCMC algorithms and discuss their application to inverse problems and the calculation of Bayes' Factors.

\subsection{Sampling from  the posterior distribution in Inverse problems}\label{sec:MCMC}

Markov Chain Monte Carlo (MCMC) is specially suited
for sampling from complex multidimensional distributions and is ubiquitous in modern Bayesian analyses.
We do not intend to present a comprehensive review of MCMC here but merely state the basic principles to consider some aspects
of implementing MCMC in the Inverse Problem context; otherwise the reader is referred to \citet{Robert&Casella2004}.
The principle of MCMC algorithms is to generate a Markov chain whose invariant distribution is the distribution of interest
$P_{ \Phi | \bY }^h( \theta,\sigma | \by )$. Many versions have been proposed in the literature. Among those, the Gibbs algorithm
and the Metropolis-Hasting algorithms are the most used and quickly described below.

\medskip
 The Gibbs sampler is a very popular MCMC algorithm.  However, only in canonical cases
 (eg. conditional conjugacy) it make sense to be used.   In our Inverse Problem settings,  this is
not the case and the general Metropolis-Hastings MCMC algorithm needs to be used instead.

\medskip
 The Metropolis-Hastings (MH) algorithm starts by defining
a proposal (or instrumental) conditional distribution $q( \phi' | \phi )$, which we are able to simulate from for any
$\phi$ in the support of the posterior distribution.  At iteration $(\ell)$:  
\begin{enumerate}
\item a proposed value $\phi'$ is simulated given the current state of the Markov
Chain $\phi^{(\ell)}$ from the proposal $q( \cdot | \phi^{(\ell)} )$;   
\item the proposed point $\phi'$ is accepted as the new
point in the Markov chain $\phi^{(\ell+1)}$ with  probability $ \rho( \phi^{(\ell)}, \phi' )$: 
$$\
phi^{(\ell +1)} = \left\{
\begin{array}{ccl}
\phi' & \mbox{with probability} & \rho( \phi^{(\ell)}, \phi' )\\
\phi'^{(\ell)} & \mbox{with probability} & 1-\rho( \phi^{(\ell)}, \phi' )
\end{array}
 \right.
 $$
where the acceptance probability $ \rho( \phi^{(\ell)}, \phi' )$ is equal to 
\begin{eqnarray*}
\rho( \phi^{(\ell)}, \phi' ) &=& \min\left\{1, \frac{P_{ \Phi | \bY }^h( \phi' | \by )}{P_{ \Phi | \bY }^h( \phi^{(\ell)} | \by )}
\frac{q( \phi^{(\ell)} | \phi' )}{q( \phi' | \phi^{(\ell)} )}\right\}\\
& =&  \min\left\{1, \frac{P_{\bY |\Phi}^h(  \by |  \phi'  )P_{\Phi}(\phi')}{P_{\bY |\Phi}^h(  \by |  \phi^{(\ell)}  )P_{\Phi}( \phi^{(\ell)})}
\frac{q( \phi^{(\ell)} | \phi' )}{q( \phi' | \phi^{(\ell)} )}\right\} .
\end{eqnarray*}
\end{enumerate}
Commonly a series of proposal distributions $q_1, q_2, \ldots , q_m$ are
entertained, leading to a series of $m$ transition kernels that are systematically or randomly scanned (the latter
leads to the desirable reversibility property) to form an easily provable convergent chain to the
posterior $P_{ \Phi | \bY }^h( \cdot | \by )$.

\medskip
 In any case, at each iteration of the MH MCMC we need to evaluate
the (unnormalized) posterior (It is therefore crucial to minimize the number of iterations in the MCMC and
the number of likelihood evaluations).

\medskip
 Optimizing MCMC algorithms (that is to say minimizing the number of iterations)  has been a very active research topic in the last decade.  There are
adaptive algorithms \citep{HaarioEtAl1998, Atchade&Rosenthal2005} that try to learn from previous steps
of the chain to adapt the proposals $q_1, \ldots , q_m$.  These methods require additional regularity conditions
on the adaptive scheme, model and prior that might limit their applicability.
\citet{Christen&Fox2010} also propose the  t-walk, which self adjusts keeping two points in the parameter
space, and that commonly samples with reasonable efficiency.  However, robust, multipurpose, automatic and
optimal methods are still far away in the MCMC horizon (to make an optimistic metaphor).

\medskip
 When it comes out to inverse problem specifically, various savings can be considered. 

\medskip
 $\bullet$ First, a naive but straightforward computational economy is to  save
 $U^{(\ell)} = -\log P_{ \bY | \Phi }^h( \by |  \phi^{(\ell)} ) - \log P_{\Phi}( \phi^{(\ell)} )$.
 Indeed, this quantity will be used for any new simulated value $\phi'$ until the chain reaches a new point $\phi^{(\ell+1)}$.
 We will see in Section~\ref{subset:comp Bayes factor} that these quantities can also be recycled for model comparison purposes. Note that these $U^{(\ell)}$ are very useful for convergence analysis.
 
\medskip
 $\bullet$ Second, in this Inverse Problem setting we may exploit the possibility that a coarser numerical solver with 
a higher error rate (or equivalently a larger step size $h_1$)  might lead to a far less CPU demanding approximate
calculation of the likelihood $P_{ \bY | \phi }^{h_1}( \by |  \phi^{(i)} )$.  More precisely, \cite{Christen&Fox2005} suggest to propose a candidate $\phi'$ and test its promising acceptance  probability using an approximate and cheap calculation  of the likelihood (using the coarser numerical solver). If this $\phi'$ has a ``good" probability to being accepted, the full blow and expensive calculation of the likelihood is used to firmly  accept or reject $\phi'$. This two-step or ``delay acceptance MH'' algorithm  may save substantial CPU time during the MCMC \citep[see][]{Christen&Fox2005}.

\subsection{Theoretical results on the error control of the approximate posterior distribution}\label{subsec:control}

After having examined how the standard algorithmic tools for Bayesian inference can be adapted or designed specifically to the Inverse problem context, we may want to understand and control the consequences implied by the use of an approximate model from a theoretical point of view. Such a  result can be found in \citet{DONNETetAL2007} who compare  the posterior distributions of the exact and approximate models respectively --namely  $P_{ \Phi | \bY  }$ and  $P^h_{ \Phi | \bY  }$--  through the total variation distance.

\begin{prop}
Assume that $\phi = (\theta, \sigma)$ remains in a compact set and that the 
numerical scheme of step size $h$ is such that $\{t_1,\dots,t_n\}\subset h \mathbb{N}$ and
\begin{equation}\label{eqn:globerr}
\max_{t \in \{t_1,\dots,t_n\}} \|X_{\theta}(t)-X^h_{\theta}(t)\|_{\R^p} \leq C_{\theta} h^p.
\end{equation}
Also assume that  the observation function $f$ is differentiable with a bounded derivative. Then 
there exists a constant $C_{\by}$ such that for every $(\theta, \sigma)$ and $h$ small enough 
\begin{equation}\label{eq:ineq1}
|P_{ \Phi | \bY  }(\theta,\sigma;\by)-  P^h_{ \Phi | \bY  }(\theta,\sigma;\by)| \leq  C_{\by}  \pi(\theta,\sigma)  h^p .
\end{equation}
As a consequence, 
\begin{equation}\label{eq:ineq2}
D_{T.V}(P_{ \Phi | \bY  },  P^h_{ \Phi | \bY  }) \leq  C_{\by}  h^p
\end{equation}
where $D_{T.V}$ is the total variation distance. 
Moreover, \begin{equation}\label{eq:ineq3}
\| (\hat{\theta}^{L^2},\hat{\sigma}^{L^2})  - (\hat{\theta}^{h,L^2},\hat{\sigma}^{h,L^2}) \|  \leq  h^p C'_{\by} 
\end{equation}
\end{prop}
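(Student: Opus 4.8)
The plan is to propagate the solver's global error bound (\ref{eqn:globerr}) successively through the observation function, the Gaussian likelihood, and the normalising constant, so that all three inequalities reduce to a single pointwise control of the likelihood difference. First I would upgrade (\ref{eqn:globerr}) to a \emph{uniform} bound: since $(\theta,\sigma)$ ranges over a compact set and the constant $C_\theta$ of Proposition~\ref{prop:error} depends continuously on $\theta$ (through the Lipschitz constant $L$ and the truncation bound $H$), there is a single $C_1$ with $\max_i\|X_\theta(t_i)-X^h_\theta(t_i)\|_2\leq C_1 h^p$ for all admissible $\theta$. Because $f$ has bounded derivative $M_f$, the mean value theorem then gives the uniform observation-level bound $|f(X_\theta(t_i))-f(X^h_\theta(t_i))|\leq M_f C_1 h^p$.

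Next I would control the likelihood difference. Factoring the common prefactor $\sigma^{-n}(2\pi)^{-n/2}$ out of the two densities and writing their exponents as $E$ and $E^h$, a telescoping identity yields
\[
E-E^h=-\frac{1}{2\sigma^2}\sum_{i=1}^n\bigl(f(X^h_\theta(t_i))-f(X_\theta(t_i))\bigr)\bigl(2y_i-f(X_\theta(t_i))-f(X^h_\theta(t_i))\bigr).
\]
On the compact set $\sigma$ is bounded below, the data are fixed, and both forward maps are bounded, so $|E-E^h|\leq C_2 h^p$. Since $E^h\leq0$ we have $|e^{E}-e^{E^h}|=e^{E^h}|e^{E-E^h}-1|\leq|e^{E-E^h}-1|\leq 2|E-E^h|$ for $h$ small, and combining this with the bounded prefactor produces a uniform likelihood bound $|P_{\bY|\Phi}(\by|\theta,\sigma)-P^h_{\bY|\Phi}(\by|\theta,\sigma)|\leq C_3 h^p$.

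Integrating this bound against the prior gives $|P_{\bY}(\by)-P^h_{\bY}(\by)|\leq C_3 h^p$ for the marginal likelihoods; since $P_{\bY}(\by)>0$ and $P^h_{\bY}(\by)\to P_{\bY}(\by)$, both normalisers exceed some $z_{\min}>0$ once $h$ is small. I would then write the posterior difference as $\pi(\theta,\sigma)\bigl[P_{\bY|\Phi}/P_{\bY}-P^h_{\bY|\Phi}/P^h_{\bY}\bigr]$ and split the bracket into the numerator difference divided by $P_{\bY}$ plus the term $P^h_{\bY|\Phi}(P^h_{\bY}-P_{\bY})/(P_{\bY}P^h_{\bY})$; bounding each summand with the two estimates above gives (\ref{eq:ineq1}) with a constant $C_{\by}$ independent of $(\theta,\sigma)$. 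Inequality (\ref{eq:ineq2}) is then immediate by integrating (\ref{eq:ineq1}) over $\Phi$ and using $\int\pi=1$, while (\ref{eq:ineq3}) follows from $\|\int(\theta,\sigma)[P_{\Phi|\bY}-P^h_{\Phi|\bY}]\,d\phi\|\leq\bigl(\sup\|(\theta,\sigma)\|\bigr)\int|P_{\Phi|\bY}-P^h_{\Phi|\bY}|\,d\phi$, both factors being finite by compactness.

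The only delicate point is the control of the normaliser: one must ensure $P^h_{\bY}(\by)$ does not collapse toward zero, and this is exactly where compactness of the parameter set is indispensable, both to obtain the uniform lower bound $z_{\min}$ and to guarantee the convergence $P^h_{\bY}\to P_{\bY}$. Once the uniform forward-map bound is secured, the remaining manipulations are routine Gaussian estimates.
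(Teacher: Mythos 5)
Your proposal is correct, but it takes a genuinely different route from the paper, which essentially does not prove the key inequality at all: the paper obtains (\ref{eq:ineq1}) by citing Theorem~4 of \citet{DONNETetAL2007} (a result stated for mixed-effects models in a maximum-likelihood context, ``adapted'' to the present setting), declares (\ref{eq:ineq2}) a direct consequence, and only writes out the short computation for (\ref{eq:ineq3}), bounding $\|\int(\theta,\sigma)\,[P_{\Phi|\bY}-P^h_{\Phi|\bY}]\,d\theta\,d\sigma\|$ by $h^p C_{\by}\int\|(\theta,\sigma)\|\,P_{\Phi}(\theta,\sigma)\,d\theta\,d\sigma$ via (\ref{eq:ineq1}). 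You instead rebuild the pointwise bound from first principles: uniformizing the solver constant over the compact set, pushing the error through $f$ by the mean value theorem, the difference-of-squares identity on the Gaussian exponents combined with $|e^{x}-1|\le 2|x|$ for small $x$, and the two-term splitting of $P_{\bY|\Phi}/P_{\bY}-P^h_{\bY|\Phi}/P^h_{\bY}$ controlled by a uniform lower bound $z_{\min}$ on both normalizing constants. What your approach buys is transparency and self-containedness: it makes explicit exactly where compactness is indispensable ($\sigma$ bounded away from zero, uniform forward-map constants, the normalizer not collapsing), and it surfaces the same likelihood-ratio mechanism that the paper later exploits in its Bayes-factor expansion (\ref{eqn.rtheta}); what the paper's citation buys is brevity, at the cost of leaving the adaptation entirely to the reader. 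Two minor remarks: your promotion of $C_{\theta}$ in (\ref{eqn:globerr}) to a uniform constant $C_1$ rests on continuity of $\theta\mapsto C_{\theta}$, which is not literally contained in the hypothesis (Assumption~\ref{assum} is only pointwise in $\theta$), so it should be flagged as an added, mild regularity assumption --- one that any complete proof, including the paper's, implicitly requires; and for (\ref{eq:ineq3}) your estimate via $\sup\|(\theta,\sigma)\|$ times the $L^1$ distance and the paper's estimate via $\int\|(\theta,\sigma)\|\,P_{\Phi}$ are interchangeable under the compact-support assumption.
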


\begin{proof}
\cite{DONNETetAL2007}'s results are developed for mixed effects models in a maximum likelihood context but can be adapted to models~(\ref{eqn.obsmodel}) and ~(\ref{eqn.obsmodel.approx}). Inequality (\ref{eq:ineq1}) is derived from Theorem 4 of  \cite{DONNETetAL2007}. The control on the total variation distance is derived directly. The inequality (\ref{eq:ineq3}) is obtained as follows
$$
\begin{array}{rcl}
&& \| (\hat{\theta}^{L^2},\hat{\sigma}^{L^2})  -  (\hat{\theta}^{h,L^2},\hat{\sigma}^{h,L^2}) \| \\
&&= \|\int (\theta,\sigma ) P_{\Phi| \bY }( \theta,\sigma  | \by ) d\theta  -  \int (\theta,\sigma ) P^h_{\Phi| \bY }( \theta,\sigma  | \by ) d\theta d\sigma\| \\
 &&= \|\int (\theta,\sigma ) \left[P_{\Phi| \bY }( \theta,\sigma  | \by ) - P^h_{\Phi| \bY }( \theta,\sigma  | \by ) \right]d\theta d\sigma\|\\
&&\leq \int \|(\theta;\sigma)\|   C_{\by}  P_{\Phi}(\theta,\sigma)  h^p d\theta d\sigma\\
&&\leq     h^p C_{\by} \int \|(\theta;\sigma)\|    P_{\Phi}(\theta,\sigma)  d\theta d\sigma 
\end{array}
$$
\end{proof}

Note that obtaining the same type of control on the MAP estimator requires additional regularity assumptions on the posterior distributions,
such as unicity of the extremum and properties on the second derivatives. This leads to encourage the use of $L^2$ estimates over the  MAP estimates in this context.

\medskip
 \cite{COTTERetAl2010} also proposes control results of the same type. Even if these results  provide  interesting theoretical error controls, they rely on unknown constants and so can not be used as such in practice.
In the next section, we adopt a Bayes factor point of view and highlight that such an approach leads to results of more practical interest.

\section{Model selection and Bayes Factor for inverse problems}\label{sec:bayes factor}

In Bayesian inference, model selection is performed using the Bayes factors whose principle is recalled here in a general context. Let $\by$ be the observations and let $\M_1$ and $\M_2$ be two models  in competition. Each model $\M_i$  is defined through a likelihood depending on a set of parameters and a prior distribution on the parameters. More precisely, 
$$
(\M_1) \left\{
\begin{array}{ccl}
\by &\sim& P^1_{\bY | \Phi_1}(\by|\phi_1)\\
\phi_1 & \sim & P_{\Phi_1}^1(\phi_1) 
\end{array}\right. 
\quad  \quad 
(\M_2) \left\{ \begin{array}{ccl}
\by &\sim& P^2_{\bY | \Phi_2}(\by|\phi_2)\\
\phi_2 & \sim & P_{\Phi_2}^2(\phi_2) 
\end{array}\right. .
$$ 
Consider a prior distribution on the set of the models $\{\M_1, \M_2\}$,  the decision between the competing
models $\M_1$ and $\M_2$ is based on the ratio of their respective posterior probabilities
$$
\frac{P( \M_2 | \by )}{P( \M_1 | \by )} = \frac{P_{\bY}^2(\by)}{P_{\bY}^1(\by)} \frac{ P(\M_2)}{ P(\M_1)}
$$
where $P_{\bY}^i(\by)$ is the `integrated likelihood' or the marginal distribution of $\bY$ of model $\M_i$, namely
$$
P_{\bY}^i(\by) = \int P^i_{\bY | \Phi_i}(\by|\phi_i) P_{\Phi_i}(\phi_i)d\phi_i .
$$
Finally, the comparison of models relies on the computation of the marginal likelihoods which has  been the object of a rich literature.  Two approaches may be cited.  One consist in running a specific MCMC to approximate the quantity of interest \citep[see][for a review]{HanetCarlin2001}. The second relies on Monte Carlo approximations of the marginal likelihood $P_{\bY}^i(\by)$  \citep[see for instance][]{ChenAndShao1997}.

\subsection{Computation of Bayes Factors in an Inverse Problem context}\label{subset:comp Bayes factor}

 In our inverse problem context --where each iteration of the MCMC requires the computationally intensive approximation of an ODE -- we would like to avoid increasing the computational burden by using a specific MCMC and  would prefer recycling the outputs of the MCMC algorithm into a Monte Carlo strategy. An answer can be found in the Gelfand and Dey's estimator \citep{GelfandetDey1994}.

\medskip
 More precisely, a standard solution  to compute the marginal likelihood is to propose an estimation based on a Monte Carlo approximation.  Let $\{\phi_i^{(l)}\}_{l=1\dots L}$ be an i.i.d. sample
from a  proposal distribution $\pi_{IS}$ then the following estimator
$$
\widehat{p}_i = \frac{1}{L} \sum_{l=1}^L P^i_{\bY | \Phi^{(l)}_i}(\by|\phi^{(l)}_i) \frac{P_{\Phi_i}(\phi^{(l)}_i)}{\pi_{IS}(\phi^{(l)}_i)}
$$
supplies a convergent and non biased estimator of $P_{\bY}^i(\by) $. 
However, in the inverse problems context (when one or both models $\M_i$ are defined through ODE without explicit solution)  such a strategy requires the evaluation of the likelihood and thus the evaluation of an approximate
solution of the dynamical system for each newly generated value of parameters $\phi^{(l)}_i$, which can be burdensome from a computational time point
of view. 

\medskip
 The Gelfand and Dey's  estimator is an alternative solution to this situation. 
Assume that (as it is in standard situations),  $P_{ \Phi | \bY }^h( \theta, \sigma  | \by )$ is sampled using an intensive
Monte Carlo procedure, typically a Metropolis-Hasting MCMC. (Note that for ease of presentation we avoid the exponent  $i$ or $h$ and, notationally, consider the posterior distribution $P_{ \Phi | \bY }( \theta, \sigma  | \by )$ and estimation of the normalization constant $P_\bY(\by)$) 

\medskip

Assume that the prior distribution $P_\Phi$ is absolutely continuos w.r.t the Lebesgue measure, and thus $P_\Phi( \theta, \sigma )$ and $P_{\Phi | \bY}( \theta, \sigma | \by)$
are densities in the usual sense.  The Gelfand and Dey's  estimator  relies on  the obvious following expression
$$
\left[ P_\bY(\by) \right]^{-1} = \int \frac{\alpha( \theta, \sigma)}{P_{\bY | \Phi}( \by |  \theta, \sigma) P_{\Phi}( \theta, \sigma)}
P_{ \Phi | \bY }( \theta, \sigma  | \by ) d\theta d\sigma ,
$$
where $\alpha$ is  any density ($\int \alpha( \theta, \sigma) d\theta d\sigma = 1$) with support containing the support of the
posterior.
  
Now, let  $\theta^{(1)}, \sigma^{(1)}, \theta^{(2)}, \sigma^{(2)}, \ldots ,\theta^{(L)}, \sigma^{(L)}$  be a MCMC sample of the
posterior   $P_{ \Phi | \bY }( \theta, \sigma  | \by )$. Considering the above expression, the desired marginal may be approximated by 
\begin{equation} \label{eqn.estMarg}
\widehat{P}_\bY(\by) =  \left[ \frac{1}{L} \sum_{l=1}^L
\frac{\alpha( \theta^{(l)}, \sigma^{(l)} )}{P_{\bY | \Phi}( \by |  \theta^{(l)}, \sigma^{(l)} ) P_{\Phi}( \theta^{(l)}, \sigma^{(l)} )} \right]^{-1} .
\end{equation}
The choice of $\alpha$ conditions the quality of the estimator (its variance). If $\alpha(\theta,\sigma) = \pi(\theta,\sigma)$,  the estimator $\widehat{P}_\bY(\by)$ is the harmonic mean which is known to have a dramatic  unstable behaviour (infinite variance) in some cases. 
Best strategies are those that use a weighting $\alpha$ density that  stabilizes this estimators, for instance using somehow an $\alpha$ that 
resembles $P_{\bY | \Phi}( \by |  \theta, \sigma) P_{\Phi}( \theta, \sigma)$.  A simple calculation leads to the fact that using a
\textit{thinner} tailed $\alpha$ (as oppose to the result in importance sampling) is better suited to obtaing a finite variance for
the estimator above.  

\medskip
 Moreover, in an inverse problem context, it is critical to avoid recalculating the likelihood  $P_{ \bY | \Phi }( \by |  \theta, \sigma) $ since it involves numerically solving the ODE system in (\ref{eqn.odemodel}). 
As a consequence we propose to  proceed as follows.

\medskip
 $\bullet$  At each iteration of a typical MH MCMC, the computation of the probability of acceptation requires to evaluate
$P_{ \Phi | \bY }( \by |  \theta^{(l)}, \sigma^{(l)} ) P_{\Phi}( \theta^{(l)}, \sigma^{(l)} )$. 
After the burn in period, we save these values, letting
$U_l = -\log P_{ \Phi | \bY }( \by |  \theta^{(l)}, \sigma^{(l)} ) - \log P_{\Phi}( \theta^{(l)}, \sigma^{(l)} )$.

\medskip
 $\bullet$ A small subsample of $ \theta^{(l)}, \sigma^{(l)} $, typically of size less than 1,000, is then used
to create a Kernel Density Estimate (KDE), which we will use as our weighting density $\alpha$.  This KDE is (typically)
a mixture of Gaussians, with support in the whole space, and will roughly resemble the posterior $P_{\Phi | \bY}$.  

\medskip
 $\bullet$ Let
$A_i = - \log \alpha( \theta^{(l)}, \sigma^{(l)} )$, then our estimate becomes
$$
P_\bY(\by) \approx \left[ \frac{1}{L} \sum_{l=1}^L \exp( U_l - A_l ) \right]^{-1} .
$$
This procedure is fast
and basically is a byproduct of the MCMC sample, with little CPU burden added.  There are robust and fast KDE's routines
available in popular programming languages like R and Python-Scipy. 

\begin{Rem} 
Note that $P_{\bY | \Phi}( \by |  \theta, \sigma) P_{\Phi}( \theta, \sigma)$ needs to be known exactly, and be coded accordingly which is not the case  in some situations, 
for example, when the prior is not normalized and only implicitly defined etc.  
\end{Rem}

\subsection{Comparing the exact and approximate  models through Bayes factors}\label{sec:comparing}

 In section \ref{subsec:control}, we compared the true and approximate models (models \ref{eqn.obsmodel} and \ref{eqn.obsmodel.approx}) through a derived quantity,
that is to say their corresponding posterior distribution or the estimators they supplied. However, a more natural way to
compare models is to use the Bayes factors.   Let $(\M)$ and $(\M_h)$ be the two following models
$$
(\M) \left\{
\begin{array}{ccl}
y_i  &=& f(X_{\theta}(t_i)) + \varepsilon_i \\
 \varepsilon_i &\sim&_{i.i.d.} \mathcal{N}(0,\sigma^2)\\
 (\theta, \sigma) &\sim&   P_{\Phi}(\theta,\sigma)
\end{array}\right. 
\quad  \quad 
(\M_h) \left\{
\begin{array}{ccl}
y_i  &=& f(X^h_{\theta}(t_i)) + \varepsilon_i \\
 \varepsilon_i &\sim&_{i.i.d.} \mathcal{N}(0,\sigma^2)\\
 (\theta, \sigma) &\sim&   P_{\Phi}(\theta,\sigma) .
\end{array}\right.
$$
Comparing $\M$ and $\M_h$ through a Bayes factor requires to compute the following quantity
$$
B_{\M,\M_h} = \frac{P( \M | \by )}{P( \M_h | \by )} = \frac{\int P_{\bY|\Phi}(\by | \theta,\sigma) P_{\Phi}(\theta,\sigma)d\theta d\sigma}{\int P^h_{\bY|\Phi}(\by | \theta,\sigma) P_{\Phi}(\theta,\sigma)d\theta d\sigma}\frac{P(\M) }{ P(\M_h)} .
$$
However, $\int P_{\bY|\Phi}(\by | \theta,\sigma) P_{\Phi}(\theta,\sigma)d\theta d\sigma$ is not known in general,
since it involves the theoretical model.
In order to understand the behavior of the Bayes Factor, we study $B_{\M,\M_h}$ for small $h$ and obtain the following result. 

\begin{prop}\label{prop:2}
Assume that the numerical solver is such
that the global error truncation can be written as 
$$
E_h(t,\theta) = X_{\theta}^h(t) -X_{\theta}(t) = O(h^p) ,
$$
where $h$ is the stepsize of the method. In addition, assume that the observation function $f$ is differentiable on $\{X_{\theta}(t), \theta \in \Theta, t\in [0,T]\}$. 
Then, there exists a constant $B(\by) \in \mathbb{R}$ (which does not depend on $h$) such that
 $$
 \frac{P_\bY(\by)}{P^h_\bY(\by)}  \simeq 1+  B(\by) h^p  .
 $$ 
\end{prop}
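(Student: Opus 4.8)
\section*{Proof proposal for Proposition~\ref{prop:2}}

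The plan is to write the ratio as a perturbation of $1$ and to expand its numerator to first order in $h^p$. First I would set
\[
\frac{P_\bY(\by)}{P^h_\bY(\by)} = 1 + \frac{P_\bY(\by) - P^h_\bY(\by)}{P^h_\bY(\by)},
\]
so that everything reduces to controlling $P_\bY(\by) - P^h_\bY(\by) = \int \big[ P_{\bY|\Phi}(\by|\theta,\sigma) - P^h_{\bY|\Phi}(\by|\theta,\sigma)\big]\, P_\Phi(\theta,\sigma)\, d\theta\, d\sigma$. Writing $S(\theta) = \sum_{i=1}^n (y_i - f(X_\theta(t_i)))^2$ and $S_h(\theta)$ for the same sum with $X^h_\theta$ in place of $X_\theta$, the two Gaussian likelihoods differ only through the exponent, so the elementary identity $e^{-a}-e^{-b} = e^{-b}(e^{-(a-b)}-1)$ yields
\[
P_{\bY|\Phi} - P^h_{\bY|\Phi} = P^h_{\bY|\Phi}\left[ \exp\!\left( \frac{S_h(\theta) - S(\theta)}{2\sigma^2}\right) - 1\right].
\]

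The key step is to expand $S(\theta) - S_h(\theta)$ to leading order in $h$. Using the global error hypothesis together with the differentiability of $f$, a first-order Taylor expansion of $f$ at $X_\theta(t_i)$ gives $f(X^h_\theta(t_i)) - f(X_\theta(t_i)) = O(h^p)$; expanding $S_h$ and retaining the cross term then gives
\[
S(\theta) - S_h(\theta) = 2\sum_{i=1}^n (y_i - f(X_\theta(t_i)))\,\big[\, f(X^h_\theta(t_i)) - f(X_\theta(t_i))\,\big] + O(h^{2p}) = h^p\,\Gamma(\theta) + o(h^p),
\]
where, writing the principal global-error function as $X^h_\theta(t_i) - X_\theta(t_i) = h^p\psi_i(\theta) + o(h^p)$, the coefficient $\Gamma(\theta) = 2\sum_{i=1}^n (y_i - f(X_\theta(t_i)))\, Df(X_\theta(t_i))\,\psi_i(\theta)$ is independent of $h$. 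Inserting this and expanding the exponential as $1 + (\cdot) + O(h^{2p})$, I would obtain, uniformly in $(\theta,\sigma)$ on the compact parameter set,
\[
P_{\bY|\Phi} - P^h_{\bY|\Phi} = -\,P^h_{\bY|\Phi}\,\frac{h^p\,\Gamma(\theta)}{2\sigma^2} + o(h^p).
\]

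Then I would integrate against the prior and divide by $P^h_\bY(\by)$. Since $P^h_{\bY|\Phi} = P_{\bY|\Phi} + O(h^p)$ and $P^h_\bY(\by) = P_\bY(\by) + O(h^p)$ (the order bounds of Subsection~\ref{subsec:control}), replacing the $h$-superscripted quantities by their limits inside the integral costs only $o(h^p)$, and I would arrive at
\[
\frac{P_\bY(\by)}{P^h_\bY(\by)} = 1 + B(\by)\,h^p + o(h^p), \qquad B(\by) = -\frac{1}{2\,P_\bY(\by)} \int \frac{\Gamma(\theta)}{\sigma^2}\, P_{\bY|\Phi}(\by|\theta,\sigma)\, P_\Phi(\theta,\sigma)\, d\theta\, d\sigma,
\]
which is finite and manifestly independent of $h$, giving the claimed $\simeq 1 + B(\by)h^p$. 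The main obstacle is rigor in the passage to the limit under the integral sign: one must dominate the integrands uniformly in $h$ to justify both the $O(h^{2p})$ remainder of the exponential expansion and the replacement of $P^h$ by $P$ inside the integral. The compactness of the parameter set and the boundedness of $Df$ on $\{X_\theta(t)\}$ (the hypotheses carried over from Subsection~\ref{subsec:control}) supply the needed uniform bounds, so dominated convergence applies. A secondary but genuine subtlety is that extracting a single $h$-independent constant $B(\by)$, rather than merely an $O(h^p)$ bound, requires the existence of the principal error function $\psi_i(\theta)$, i.e. the asymptotic expansion $X^h_\theta(t) - X_\theta(t) = h^p\psi(t,\theta) + o(h^p)$ and not just the bare order estimate; this holds for the fixed-step one-step schemes considered here under Assumption~\ref{assum}.
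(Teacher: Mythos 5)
Your proof is correct and takes essentially the same route as the paper's: expand the exact-versus-numerical Gaussian likelihood discrepancy to first order in the solver error (your $S(\theta)-S_h(\theta)$ expansion is the paper's $R_h(\phi)-1$ expansion in mirror image), then integrate against the prior. The one point where you go beyond the paper is worth keeping: the paper's proof stops at $P^h_\bY(\by)=P_\bY(\by)+O(h^p)$ and then simply asserts an $h$-independent constant $B(\by)$, whereas you correctly observe that extracting such a constant requires the principal-error expansion $X^h_\theta(t)-X_\theta(t)=h^p\psi(t,\theta)+o(h^p)$ (valid for the fixed-step one-step schemes considered), not merely the bare order bound, and your explicit formula for $B(\by)$ together with the domination argument makes the paper's ``$\simeq$'' rigorous.
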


\begin{proof}
Using the asymptotic behavior of the global error truncation and assuming that $f$ is differentiable, we can write
\begin{equation}\label{eqn:Dh}
D_h(t,\theta) = f(X_{\theta}^h(t)) - f(X_{\theta}(t)) =  \nabla f \left( X_{\theta}(t)\right)(X_{\theta}^h(t) -X_{\theta}(t)) +  O(h^p) 
\end{equation}
This approximation allows us to obtain a development of the marginal likelihood.
Let $R_h(\phi) = \frac{P^h_{\bY | \Phi}(\by|\theta,\sigma)}{P_{\bY | \Phi}(\by|\theta,\sigma)}$, then
\begin{eqnarray*}
P^h_\bY(\by)  &=& \int P_{\bY | \Phi}^h( \by |  \phi) P_{\Phi}(\phi) d\phi =
\int P_{\bY | \Phi}(\by| \phi)R_h(\phi) P_{\Phi}(\phi) d\phi \\
&=& P_\bY(\by) +  \int P_{\bY | \Phi}(\by| \phi)(R_h(\phi)-1) P_{\Phi}(\phi) d\phi .
\end{eqnarray*}
We see that
\begin{eqnarray} \label{eqn.rtheta}
R_h(\phi)  -1   &=& \exp \left\{ -\frac{1}{2\sigma^2}\sum_{i=1}^n \left[f(X^h_{\theta}(t_i))-f(X_{\theta}(t_i))\right]^2 + \right.\nonumber \\
         && \left.  2\left[y_i-f(X_{\theta}(t_i))\right]\left[f(X_{\theta}(t_i))-f(X^h_{\theta}(t_i))\right]  \right\}  -1 \nonumber \\
         &=& - \frac{1}{2\sigma^2}\sum_{i=1}^n D_h(t_i,\theta) ^2 + 2 (y_i-f(X_{\theta}(t_i)))D_h(t_i,\theta) \nonumber \\
         && + O(D_h(t_i,\theta)^2) ,
\end{eqnarray}
since $e^x -1 = x + O(x^2)$ for $x$ small enough.
Using the expression in (\ref{eqn:Dh}) for $D_h$ and the solver global error control
assumption in (\ref{eqn:globerr}) we must have that $R_h(\phi)-1  = O(h^p)$.
From this we get
$$
P^h_\bY(\by) = P_\bY(\by) + O(h^p)
$$ 
implying that for $h$ small enough, 
$$
\frac{P_\bY(\by)}{P^h_\bY(\by)} \simeq 1+ B(\by)h^p .
$$ 
\end{proof}

\medskip

\begin{corollaire}\label{prop:2}
If $\hat{g} = \int g(\phi)  P_{ \Phi | \bY }( \phi | \by ) d\phi$ and $\hat{g}^h = \int g(\phi)  P^h_{ \Phi | \bY }( \phi | \by ) d\phi$
exists, then $| \hat{g}^h - \hat{g} | = \frac{P_\bY(\by)}{P^h_\bY(\by)} B_g(\by)h^p = O(h^p)$.
\end{corollaire}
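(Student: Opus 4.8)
The plan is to reduce the statement to Proposition~\ref{prop:2} by writing both posterior expectations as ratios of unnormalized integrals and then tracking the $O(h^p)$ contributions. Introduce the \emph{unnormalized} quantities $I = \int g(\phi) P_{\bY | \Phi}(\by|\phi) P_{\Phi}(\phi)\, d\phi$ and $I^h = \int g(\phi) P^h_{\bY | \Phi}(\by|\phi) P_{\Phi}(\phi)\, d\phi$, so that by Bayes' formula $\hat{g} = I / P_\bY(\by)$ and $\hat{g}^h = I^h / P^h_\bY(\by)$. The existence hypothesis on $\hat{g}$ guarantees $\int |g(\phi)| P_{\bY | \Phi}(\by|\phi) P_{\Phi}(\phi)\, d\phi < \infty$, which is exactly what will let me push the asymptotic estimates inside the integral.

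First I would recycle the factorization $P^h_{\bY | \Phi} = R_h(\phi) P_{\bY | \Phi}$ used in the proof of Proposition~\ref{prop:2}, so that $I^h = I + \Delta$ with $\Delta = \int g(\phi)(R_h(\phi)-1) P_{\bY | \Phi} P_{\Phi}\, d\phi$, and likewise $P^h_\bY(\by) = P_\bY(\by) + \delta$ with $\delta = \int (R_h(\phi)-1) P_{\bY | \Phi} P_{\Phi}\, d\phi$ (this $\delta$ is precisely the $O(h^p)$ increment established in Proposition~\ref{prop:2}). Substituting into the common-denominator form and using $I = \hat{g}\, P_\bY(\by)$,
$$
\hat{g}^h - \hat{g} = \frac{I^h P_\bY(\by) - I P^h_\bY(\by)}{P^h_\bY(\by) P_\bY(\by)} = \frac{\Delta\, P_\bY(\by) - I\,\delta}{P^h_\bY(\by) P_\bY(\by)} = \frac{\Delta - \hat{g}\,\delta}{P^h_\bY(\by)},
$$
so that the entire difference is governed by the two increments $\Delta$ and $\delta$.

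Next I would invoke the pointwise expansion $R_h(\phi) - 1 = r_1(\phi) h^p + o(h^p)$ already obtained inside the proof of Proposition~\ref{prop:2}; its leading coefficient $r_1(\phi)$ comes from the term of $R_h(\phi)-1$ that is linear in the global error, namely $\sigma^{-2}\sum_i (y_i - f(X_{\theta}(t_i)))\, D_h(t_i,\theta)$ with $D_h(t_i,\theta) \approx \nabla f(X_{\theta}(t_i))\, E_h(t_i,\theta)$, the quadratic-in-$D_h$ piece being $O(h^{2p})$. Writing $\Delta = \hat{\Delta}\, h^p + o(h^p)$ and $\delta = \hat{\delta}\, h^p + o(h^p)$ with $\hat{\Delta} = \int g\, r_1\, P_{\bY | \Phi} P_{\Phi}\, d\phi$ and $\hat{\delta} = \int r_1\, P_{\bY | \Phi} P_{\Phi}\, d\phi$, and setting $B_g(\by) = (\hat{\Delta} - \hat{g}\,\hat{\delta})/P_\bY(\by)$, which does not depend on $h$, the display above becomes $\hat{g}^h - \hat{g} \simeq \frac{P_\bY(\by)}{P^h_\bY(\by)} B_g(\by)\, h^p$, since $1/P^h_\bY = (P_\bY/P^h_\bY)\cdot(1/P_\bY)$. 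Because $P_\bY/P^h_\bY \to 1$ by Proposition~\ref{prop:2} and $B_g(\by)$ is finite, the stated bound $|\hat{g}^h - \hat{g}| = O(h^p)$ follows.

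The main obstacle is the step converting the pointwise $O(h^p)$ control of $R_h(\phi)-1$ into the $O(h^p)$ behaviour of the integrals $\Delta$ and $\delta$, i.e.\ justifying the interchange of the $h\to 0$ expansion with integration against $P_{\bY | \Phi} P_{\Phi}$. This requires uniformity of the error estimate over the parameter support, which is exactly what the compactness assumption on $\phi=(\theta,\sigma)$ and the $\theta$-uniform global-error bound $\max_t \|X_{\theta}(t) - X^h_{\theta}(t)\| \le C_{\theta} h^p$ (Proposition~\ref{prop:error} and the control proposition of Section~\ref{subsec:control}) supply, together with integrability of $|g|$ against the likelihood--prior product from the existence hypothesis. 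A dominated-convergence argument then legitimizes simultaneously the $O(h^p)$ rate and the well-definedness of the leading coefficients $\hat{\Delta}$ and $\hat{\delta}$.
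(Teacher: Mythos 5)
Your proposal is correct and follows essentially the same route as the paper: after clearing denominators, your identity $\hat{g}^h - \hat{g} = (\Delta - \hat{g}\,\delta)/P^h_\bY(\by)$ is exactly the paper's decomposition $\frac{P_\bY(\by)}{P^h_\bY(\by)}\bigl[\int g(\phi)(R_h(\phi)-1)P_{\Phi|\bY}(\phi|\by)\,d\phi - (\tfrac{P^h_\bY(\by)}{P_\bY(\by)}-1)\hat{g}\bigr]$ written with unnormalized integrals, and both arguments then conclude via the pointwise bound $R_h(\phi)-1 = O(h^p)$ from (\ref{eqn.rtheta}) together with Proposition~\ref{prop:2}. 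Your additional dominated-convergence justification for passing the $O(h^p)$ estimate through the integral is a welcome refinement of a step the paper leaves implicit, but it does not change the underlying approach.
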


\begin{proof}
Note that
$| \hat{g}^h - \hat{g} | = \left|
\int g(\phi) R_h(\phi) \frac{P_\bY(\by)}{P^h_\bY(\by)} P_{ \Phi | \bY }( \phi | \by ) d\phi  - \hat{g} \right|$ and therefore
$| \hat{g}^h - \hat{g} | = \frac{P_\bY(\by)}{P^h_\bY(\by)}  \left|
\int g(\phi) (R_h(\phi) -1) P_{ \Phi | \bY }( \phi | \by ) d\phi  - \left( \frac{P^h_\bY(\by)}{P_\bY(\by)} - 1 \right)  \hat{g} \right|$.
Combining (\ref{eqn.rtheta}) and the above theorem one reaches the result.
\end{proof}

Regarding the above result, we make the following comments and remarks.

\medskip
 $\bullet$ From (\ref{eqn.rtheta}), we note that the error in the regression term $D_h(t,\theta)$ is not important per se but with respect to the observation noise standard error $\sigma$. This obvious remark has consequences. It means that, when working on a statistical model involving the numerical approximation of a differential system, there is no need in choosing a step size the smaller as possible but adapting it such that the global error is small with respect to $\sigma$. This can allow computational time savings as illustrated on the following numerical examples. 

\medskip
 $\bullet$
  Note that $B( \by)$ only depends on the numerical method  and on the data, but not on the step size $h$. 	

\medskip
 $\bullet$
 The constant $B( \by)$ can be estimated. Indeed, an obvious method is to compute $P_\bY^{h_k}(\by)$ for various values of step size $\{h_k, k=1\dots K\}$. A simple linear regression of $P_\bY^{h_k}(\by)$ against $h_k^p$  gives  an estimation of $B(\by)$.  This means that using a multi-resolution computation of the $P_\bY^{h_k}(\by)$ on various approximate models, we are able to estimate the marginal likelihood of the true model. 

\medskip
 $\bullet$  This last point is of major importance. Assume that, two models $\M_1$ and $\M_2$ have to be compared, one or both of them being defined by a differential system without explicit solution. These two models can be compared per se. Indeed, whereas one could have thought that only the approximate models $\M_1^h$ and $\M_2^{h}$ where comparable, we establish that the both true or exact models $\M_1$ and $\M_2$ may be directly compared. 

\medskip
 In the next section,  we develop two examples where we estimate the Bayes factor of the theoretically exact model with approximate models.  As a consequence
the step size may be coarsened, obtaining basically the same posterior distributions, but at far lower CPU costs.

\section{Numerical examples}\label{sec:numerical examples}

\subsection{Logistic Growth Models}\label{sec:exa1}

We base our first numerical study on the logistic growth model which is a common model of population growth in ecology.
Recently it has also been used to model  tumors growth in medicine, among many other applications.
Let $X(t)$ be the size of the tumor to time $t$. The dynamics are governed by the following differential equation
\begin{equation}\label{eq:logistic}
 \frac{dX}{dt} =\lambda X(t) (K-X(t)), \quad X(0)=X_0
\end{equation}
with $r  = \lambda K$ being the growth rate and $K$ the carrying capacity e.g. $\lim_{t\rightarrow \infty} X(t) = K$.  
 Equation (\ref{eq:logistic}) has an explicit solution equal to
$$
X(t) = \frac{KX_0e^{\lambda Kt}}{K + X_0(e^{\lambda K t}-1)}.
$$
We simulate two synthetic data sets with the error model $ y_i  =X(t_i) + \varepsilon_i$, where $\varepsilon_i \sim \mathcal{N}(0,\sigma^2)$, and  the following parameters $X(0)=100, \quad  \lambda= 1, \quad K=1000, \quad \sigma = 1 \mbox{ or }  30.$
The datasets are plotted on Figure \ref{fig.data} for the two chosen values of $\sigma$.
We consider $26$ observations at times $t_i$  regularly spaced between $0$ and $10$. 
\begin{figure}
\centering
\begin{tabular}{c l}
\includegraphics[width=6cm]{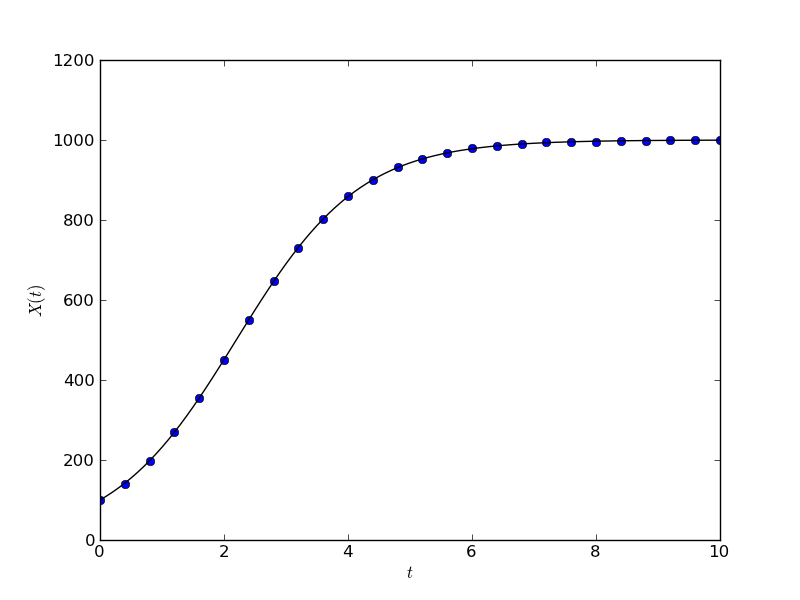} &
\includegraphics[width=6cm]{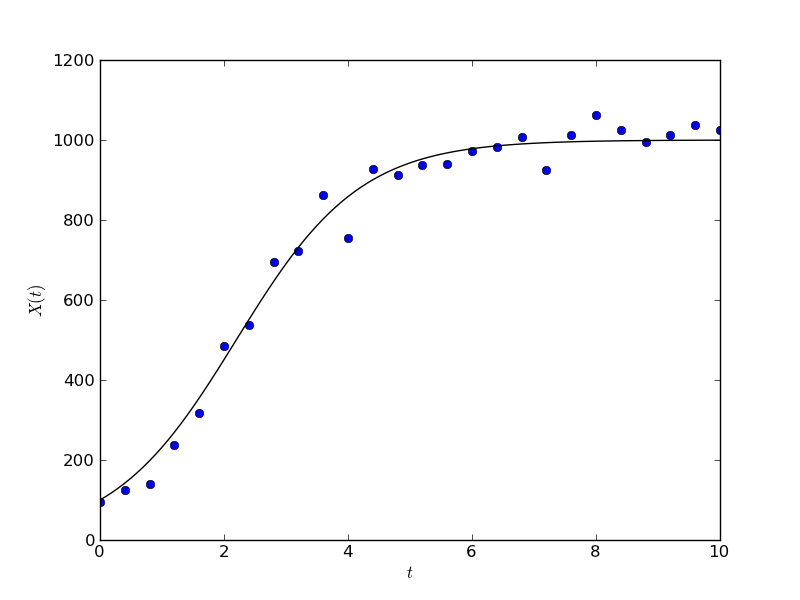} 
\end{tabular}
\caption{Synthetic data for the Logistic growth with $\lambda=1$,  $K=1000$ and  $\sigma=1$ (left) or $\sigma=30$ (right).}
\label{fig.data} 
\end{figure}

\medskip
 For this first toy example, $K$ is
taken as known and inference is concentrated on the single parameter $\lambda$; we consider a Gamma
distribution  for the prior on $\lambda$.

\medskip
  To highlight our result presented in  Section \ref{sec:posterior}, we consider the following strategy.  For $\sigma=1$, we first compute what we call the ``true" marginal likelihood $P_\bY(\by)$ (red line in Figure \ref{fig.ExaLog1}), using the  explicit solution of (\ref{eq:logistic}) and numerical integration. In a second step, we approximate the solution of (\ref{eq:logistic}) by the Euler scheme (equivalent to the Runge-Kutta solver of order 1), for various step sizes  $h_k$.  The  marginal likelihood  $P_\bY^{h_k(1)}(\by)$ is computed using both numerical integration and the Monte Carlo strategy presented in Section \ref{subset:comp Bayes factor}.  These results are plotted in black in Figure \ref{fig.ExaLog1},  with solid thin lines for the numerical integration and triangles for the Monte Carlo computation. 
 In a third step, the same strategy is applied but using a Runge-Kutta solver of order $2$ (in green in Figure \ref{fig.ExaLog1}). Finally, we use a classical Runge-Kutta solver  of order $4$ (in green in Figure \ref{fig.ExaLog1}).  In the end, for each order $p$, the estimated values $\hat{P}_\bY^{h_k( p)}(\by)$ are used to compute the regression functions $\hat{P}_\bY^{h_k( p)}(\by) = a + bh^p$ (thick lines in black, green and blue on Figure \ref{fig.ExaLog1}). These results are presented in Figures \ref{fig.ExaLog1} and \ref{fig.ExaLog2}. 

\medskip
  The same is done for $\sigma=30$ but only the RK solver of order 4 is considered. The equivalent results are presented in Figure  \ref{fig.ExaLog4}. 
The samples from the posterior distributions are obtained using a t-walk MCMC algorithm \citep{Christen&Fox2010}. 
Next we discuss some aspects of this numerical experiment.

\begin{figure}
\centering
\includegraphics[height=10cm, width=12cm]{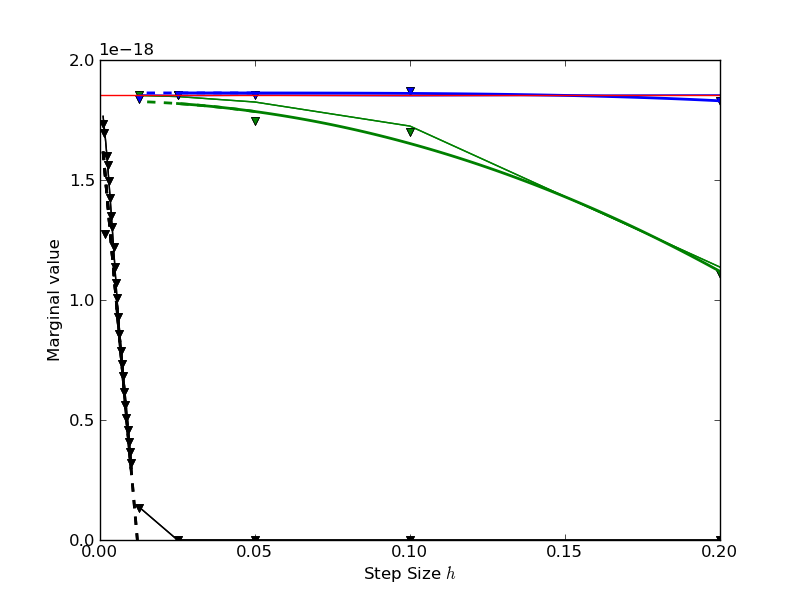} 
\caption{\emph{Study on synthetic data for the Logistic growth with $\sigma=1$}.   Marginal
$P_\bY^h(\by)$ for various step sizes, computed by numerical integration (solid thin lines) or  estimated using
the MCMC sample (triangles).  In black,   Runge-Kutta solver (RK) of order 1 (Euler), in green RK of order $p=2$,
in blue  RK of order $p=4$. Red line:  true marginal $P_\bY(\by)$ calculated using numerical integration on the analytic solution.
Thick lines indicate the regression for estimated values for $\hat{P}_\bY^h(\by) = a + b h^p$ for the orders $p=1,2,4$.}
\label{fig.ExaLog1} 
\end{figure}

\medskip
 $\bullet$  We would like to highlight that the thick lines and the triangles are quite similar, meaning that the  Monte Carlo strategy (derived as a by-product of the MCMC implementation) is an efficient solution to estimating the marginal likelihood in this context. This approximation has the great advantage of not to involving any new ODE solver runs and thus has a minimal computational cost.

\medskip
 $\bullet$  As predicted, the Euler scheme has a linear approximation regime to the correct marginal. To have any substantial save in CPU without compromising posterior inference precision we would need to have a very small step size, i.e. there is no `flat part' in order to take considerable larger step sizes.  On the contrary, the Runge-Kutta solvers of order 2 and specially the classical RK of order 4,
they indeed have a clear flat section where a nearly perfect estimation has been reached.  This allows for choosing a much larger step size, meaning a far coarser ODE numerical solver which still has basically no difference in the resulting inference and may be seen
in the resulting posterior distributions in Figures~\ref{fig.ExaLog2}(b) and~\ref{fig.ExaLog4}(c).

\medskip
 $\bullet$   For the RK or order $4$,  we perform a linear regression using the estimated $\hat{P}_\bY^{h_k}(\by)$  with $h_k= 0.2$,  $0.1$,  $0.05$,  $0.025$ for $\sigma=1$ and $h_k= 0.8$,  $0.6$,  $0.4$,  $0.2$,  $0.1$,  $0.05$ for $\sigma=30$. Using the formula given in Proposition \ref{prop:2},  we deduce an estimation (projection) of the exact marginal likelihood $\hat{P}_\bY(\by)$,  which has to be compared to the true value  $P_\bY(\by)$ (obtained using the exact solution of the ODE and a numerical integration). The results are given in the following
Table~\ref{tab.comp}.

\begin{table}
\begin{center}
\begin{tabular}{ccc}
\hline
$\sigma$ & $P_\bY(\by)$ & $\hat{P}_\bY(\by)$\\
\hline
$1$ & $1.854\, 10^{-18}$ & $1.862\, 10^{-18}$\\
$30$ & $1.638\, 10^{-60}$ & $1.699\, 10^{-60}$\\
\hline
\end{tabular}
\caption{\label{tab.comp}Comparison of exact an estimated marginals for the Ringue-Kutta method of order 4.}
\end{center}
\end{table}

 $\bullet$
We believe that the most important message is that, for the RK of order $4$, as soon as $h$ is lower than some threshold
(we took 0.05 for $\sigma=1$ and 0.1 for $\sigma=30$), the Bayes factor ratio $P_\bY^{h_k(2)}(\by) / P_\bY(\by)$
is greater than 0.99, making the models indistinguishable on the Jeffrey's Bayesian scale and leading to nearly identical
posterior distributions (see Figures~\ref{fig.ExaLog2}(b) and~\ref{fig.ExaLog4}(c)) for $\lambda$. However, the computational time required to estimate the parameters using the smallest $h$ explodes (see Figures~\ref{fig.ExaLog2}(a) and~\ref{fig.ExaLog4}(b)) from $2$ min for  $h=0.05$ to $17$ min for $h=0.01$, for $\sigma=1$, and from $2.5$ min for  $h=0.1$ to $36$ min for $h=0.000625$, for $\sigma=30$.

\begin{figure}
\begin{center}
\begin{tabular}{c l}
(a) 
\includegraphics[height=5cm, width=5cm]{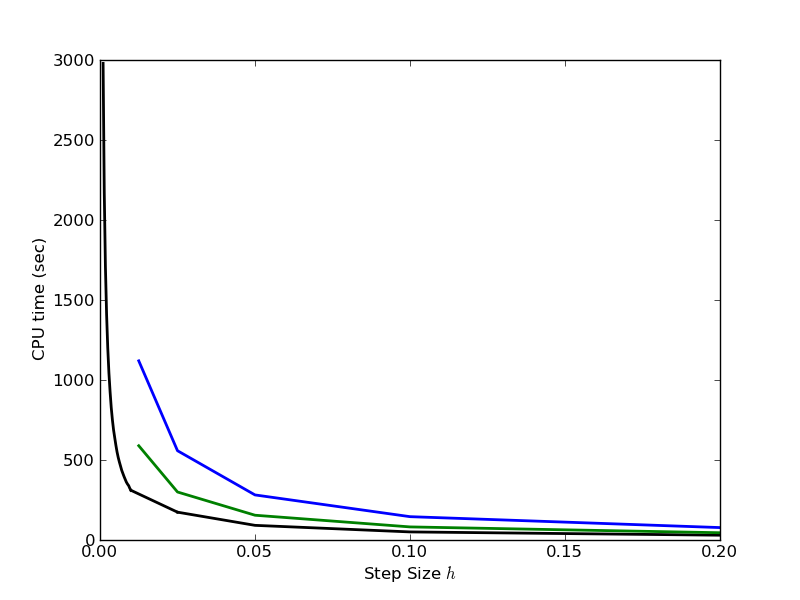} & 
(b) 
\includegraphics[height=5cm, width=5cm]{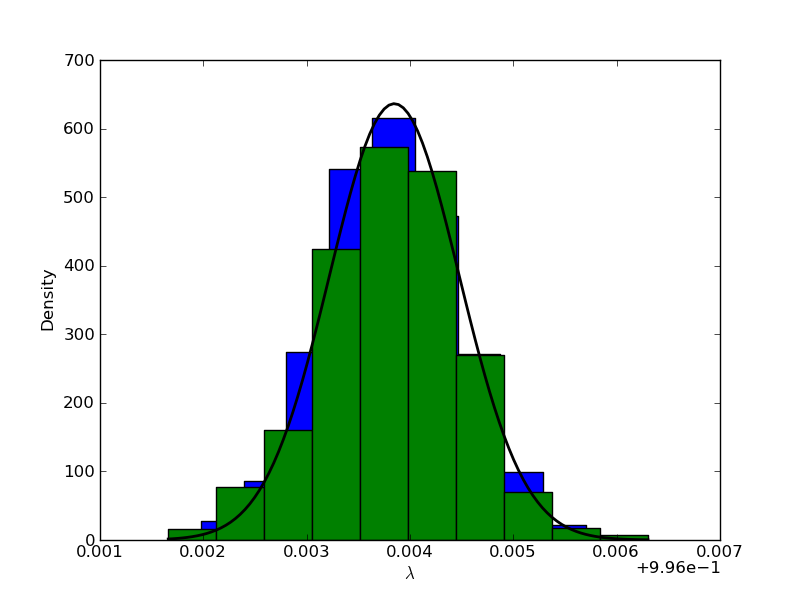} \\
\end{tabular} \\
\caption{(a) CPU time for various step values $h_k$ and $p=1,2,4$, relative to 10,000 iterations of the MCMC.  (b) Posterior distribution of $\lambda$ the for
RK4 solver, $p=4$, for step sizes $h=0.01$ and $h=0.06$ (histograms) and exact posterior (black density).  10,000 iterations
of the MCMC took 17 min for $h=0.01$ and 2 min for $h=0.05$; a 90\% reduction in CPU time with no noticeable difference
in the resulting posterior distribution.}
\label{fig.ExaLog2} 
\end{center}
\end{figure}

\smallskip  $\bullet$ Note that the ranges of considered values for $h_k$ are different for $\sigma=1$ and $\sigma=30$. This has to be linked to the remark we have made above:  the error induced by the numerical integration of the ODE can not be considered per-se but with respect to the observation noise. When $\sigma=30$,  the step size $h^*$ such that for any $h \leq h^*$ all the models $\M^h$ are equivalent on the Jeffrey's scale is much more higher involving even larger computational time savings. 

\begin{figure}
\begin{center}
\begin{tabular}{c l}
\multicolumn{2}{c}{(a)\includegraphics[height=7cm, width=10cm]{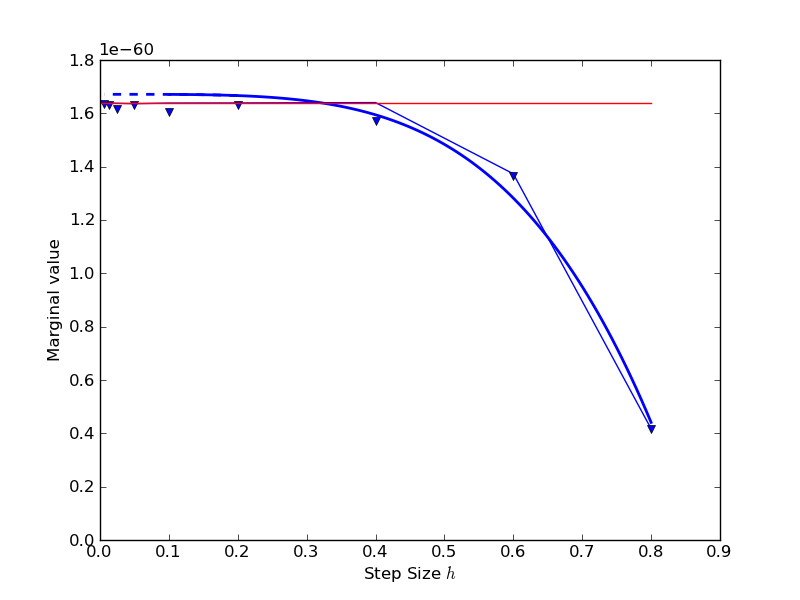}}\\ 
(b)\includegraphics[height=5cm, width=5cm]{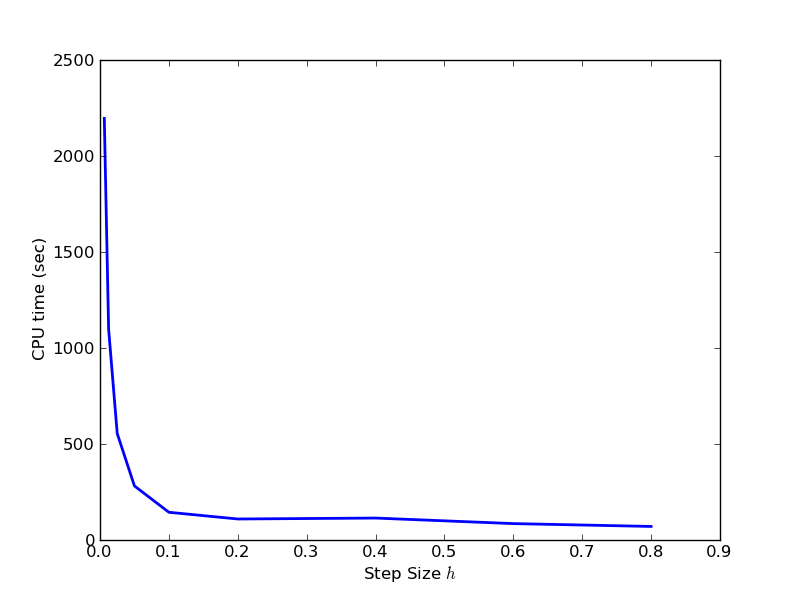}&
(c)\includegraphics[height=5cm, width=5cm]{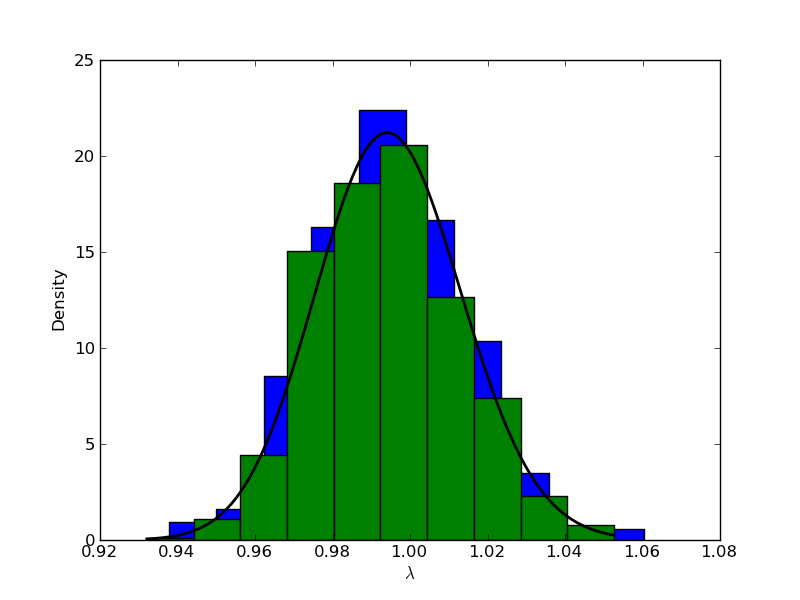} \\
\end{tabular} \\
\caption{\label{fig.ExaLog4} \emph{Study on synthetic data for the Logistic growth with $\sigma=30$}.  (a) Marginal
$P_\bY^h(\by)$ for various step sizes, both exact (solid thin lines, using numerical integration) and estimated using
the MCMC sample (triangles).  We use a Rungue-Kutta solver of order 4 (classical RK4, blue), only.
Red line:  true marginal $P_\bY(\by)$ calculated using numerical integration on the analytic solution.
Thick lines indicate the regression for estimated values for $\hat{P}_\bY^h(\by) = a + b h^p$ for the order $p=4$.
(b) Corresponding CPU time, relative to 10,000 iterations of the MCMC.  (c) Posterior distribution of $\lambda$ the for
RK4 solver, $p=4$, for step sizes $h=0.00625$ and $h=0.1$ (histograms; and exact posterior, black density).
The former takes 36 min and the latter 2.5 min.}
\end{center}
\end{figure}

\subsection{A Diabetes minimal model}\label{sec:exa2}

We know illustrate our results on a real dataset and a more complex model for an Oral 
Oral Glucose Tolerance Test (OGTT). After having briefly described the experiment and the model,
we present our results.

\medskip
 An Oral Glucose Tolerance Test (OGTT)
is performed for diagnosis of diabetes, methabolic syndrome and other conditions.
After a night sleep, fasting patients are measured for blood glucose and asked to drink a sugar
concentrate.  Blood glucose is then measured at the hour, two hours and sometime at three hours, depending on
local practices. We are developing a minimal model for blood glucose-insulin interaction base on a
two compartment model.  One simple
transfer compartment of glucose in the digestive system and one more complex compartment for blood glucose
and interactions with Insulin and other glucose substitution mechanisms.  Here we present this model
to show our methodology estimating one parameter only (namely, the insulin sensitivity). 
Although there is no analytical solution we are able to find the marginal $P_\bY^h(\bY)$ by numerical
integration (since there is only one parameter involved) for
comparison purposes.

\medskip
 Let $G(t)$ be blood glucose level at time $t$, in mg/dL.  Let $I(t)$ be blood insulin level at time $t$ and
$L(t)$ ``glucagon'' levels, to promote liver Glycogen glucose production, in arbitrary units.  Let $D(t)$ be the digestive system
`glucose level'; we take it as a compartment in which glucose is first stored (eg. stomach and digestive tract) and in turn
delivered into the blood stream (we state $D(t)$ in the same units as for $G(t)$ and therefore the mean life parameter $\theta_2$ in (\ref{eqn.minmod.D}) is the same as the one used in (\ref{eqn.minmod.G}) below). Let also $G_b$ be the glucose base line, (=80 mg/dL, fixed). Our model is described by the following system of differential equations 

\begin{eqnarray}
\label{eqn.minmod.G} \frac{dG}{dt} & = &    \left(L - I \right) G +  \frac{D}{\theta_2} , \\ 
\label{eqn.minmod.I}   \frac{dI}{dt}  & = &   \theta_0 \left( \frac{G}{G_b} - 1 \right)^+ - \frac{I}{a} , \\
\label{eqn.minmod.L}  \frac{dL}{dt} & = &   \theta_1 \left( 1 - \frac{G}{G_b} \right)^+ - \frac{L}{b} , \\
\label{eqn.minmod.D} \frac{dD}{dt} & = & - \frac{D}{\theta_2} .
\end{eqnarray}

\medskip
 A brief explanation of the model goes as follows.  When glucose goes above the normal threshold
$G_b$, Insulin is produced,
ie. its derivative increases, see (\ref{eqn.minmod.I}).  This, in turn, acts on blood glucose to decrease its concentration;
a mass-action type term is introduced in (\ref{eqn.minmod.G}) to decrease the derivative of $G$.
$L$ is an abstract term related to the glucose recovery system.  When Glucose $G(t)$ goes below the
normal threshold ($G_b$) $L$ increases, see (\ref{eqn.minmod.L}), to increase the derivative of $G(t)$
(thus eventually increasing the glucose), see (\ref{eqn.minmod.G}).  Finally, $D(t)$ represents the glucose in
the digestive compartment that will be transferred to the blood stream, see (\ref{eqn.minmod.D}) and
(\ref{eqn.minmod.G}).  We analyze data from an OGTT conducted in an obese male adult patient
with a suspected methabolic syndrome condition; the corresponding data are plotted on Figure~\ref{fig.ExaDiabMinMod.data}.
All parameters are positive and will be set to
$\theta_1 = 26.6, \theta_2 = 0.2, a=1, b=2$, while $\theta_0$ is taken as unknown and will be estimated from data.
This is an unusual experimental data set in which Glucose was measured every 30 min
up to 2 hr.

\begin{figure}
\begin{center}
\includegraphics[height=8cm, width=12cm]{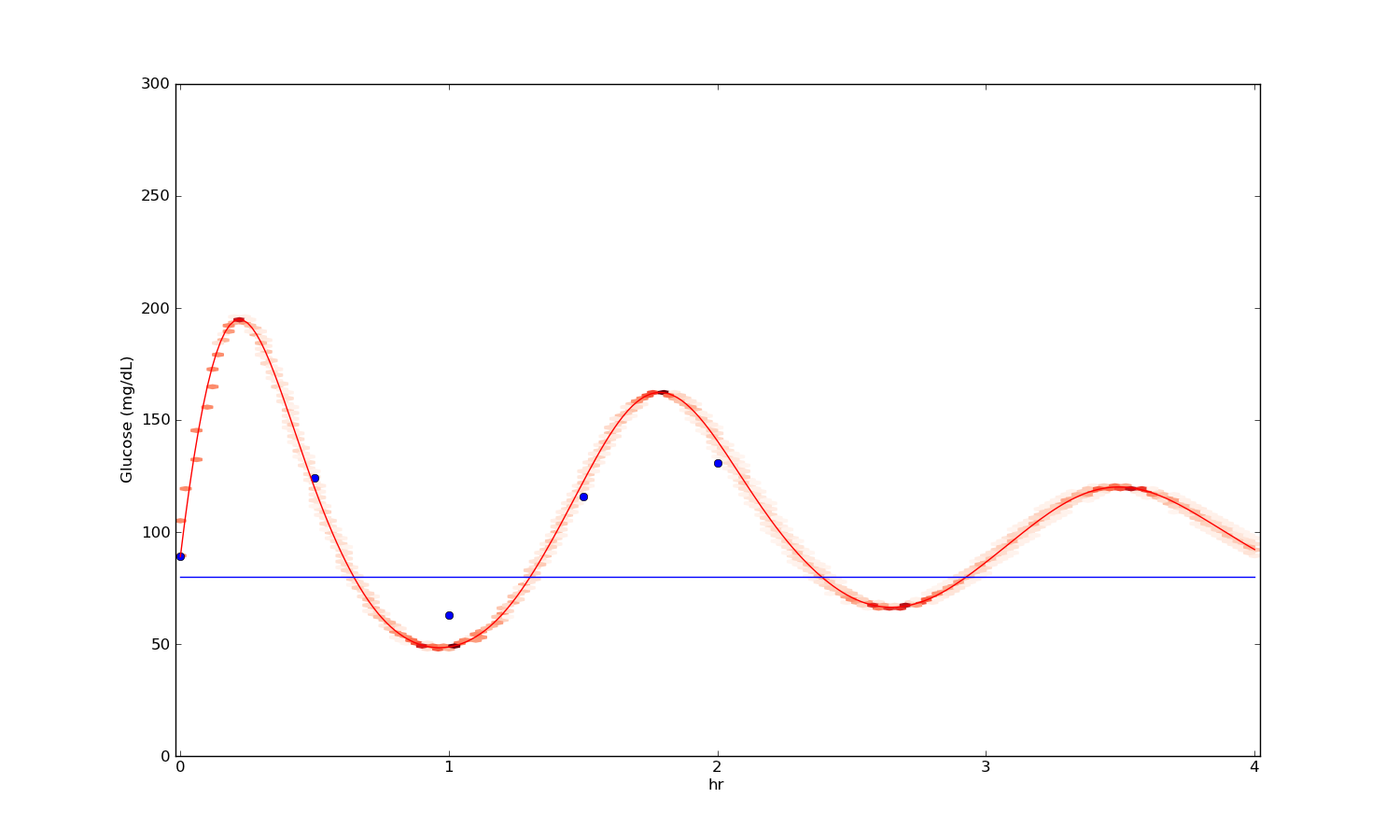}
\caption{\label{fig.ExaDiabMinMod.data} OGTT test performed in an obese male adult, with glucose measurents
taken every 30 min up to 2 hr.  Note the oscilating nature of the data, typically of a not well control Insulin-Glucose
system.  Both $\theta_0$ and $\theta_1$ have large values in comparison to normal subjects.  The MAP model
is shown in red, along with draws from the posterior predictive distribution shown in the shaded areas.}
\end{center}
\end{figure}

\medskip
 Our Bayesian inference is performed as follows.  We have observations $d_1, d_2, \ldots, d_n$
for Glucose, thus we let
$$
d_i = G(t_i) + e_i ~~\text{where}~~e_i \sim N(0, \sigma ),
$$
and $G(0) = d_0$ the initial condition; we fix the measurement error to $\sigma = 5$ (the observation functional
is therefore $f(X) = X_1$).  From this a likelihood is constructed.
A Gamma prior distribution is assumed for parameter $\theta_0$, with shape parameter $=5$ and rate $=2/5$, thus with
mean $=2$, apparently a value for a normal person.  Using an order 4 Rugue-Kutta solver with varying step size
we perform a MCMC for these parameters using the t-walk (Christen and Fox, 2010).

\begin{figure}
\begin{center}
\begin{tabular}{c l}
\multicolumn{2}{c}{(a)\includegraphics[height=7cm, width=10cm]{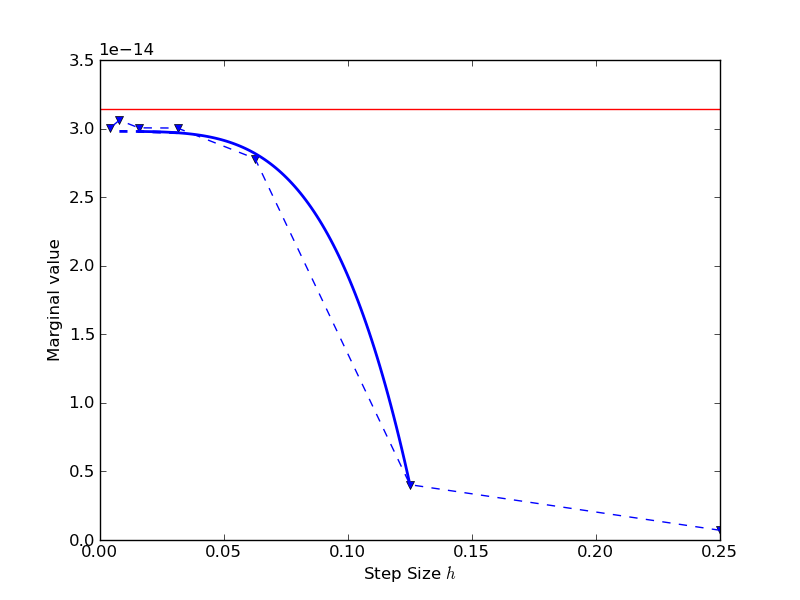}}\\ 
(b)\includegraphics[height=5cm, width=5cm]{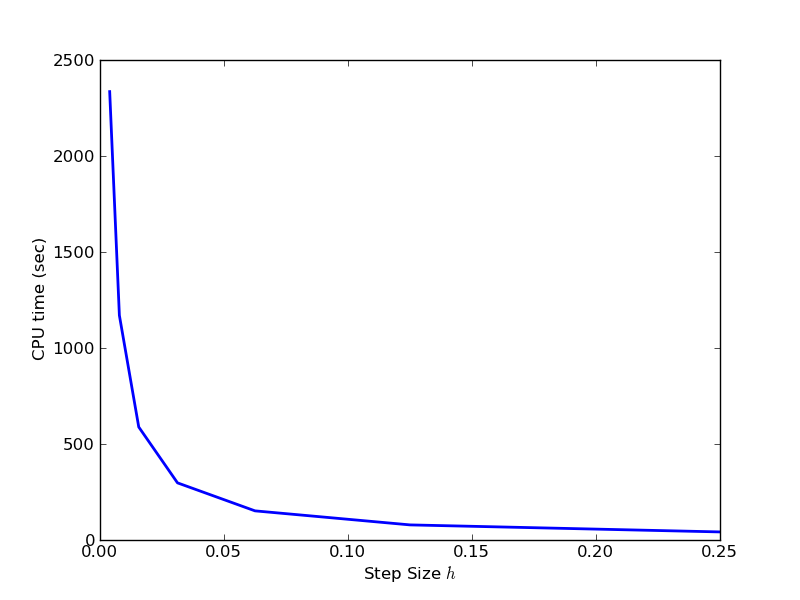}&
(c)\includegraphics[height=5cm, width=5cm]{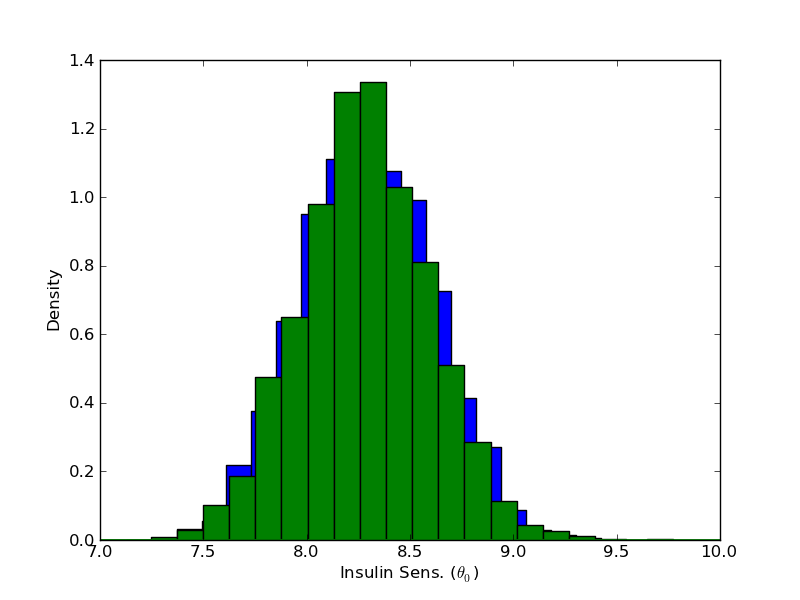} \\
\end{tabular} \\
\caption{\label{fig.ExaDiabMinMod} Bayes Factors study for the Diabetes minimal model.  An order 4 Runge-Kutta solver
was used to produce marginal values $P_\bY^h(\bY)$ for step sizes as show in (a) and their corresponding CPU times
are depicted in (b).  The red line in (a) is the numerical
integration approximation of $P_\bY^h(\bY)$ using step size $0.25 \cdot 2^{-7}$ (smallest step size used) while the triangles
are Monte Carlo estimates performed as in (\ref{eqn.estMarg}); these seem to slightly underestimate the former.  The solid blue line is a regression model $a + b h^4$ estimate using step sizes marginal estimates from
$0.25 \cdot 2^{-1}$ to $0.25 \cdot 2^{-4}$ only. 
In (c) we compare the resulting posterior with step size $0.25 \cdot 2^{-3}$ and $0.25 \cdot 2^{-7}$
showing basically no difference and resulting in a near 90\% reduction in CPU evaluation time.}
\end{center}
\end{figure}

To make the solver evaluation time steps (in hr) include the observation times we take a rough time step of 15min (=0.25hr)
and divide it into finer time steps defined as $h^{(k)} = 0.25 \cdot 2^{-k}$.  Our experiments included $k=0, \ldots , 7$,
as seen in in Figure~\ref{fig.ExaDiabMinMod}(a).  We only use an order 4 Runge-Kutta solver,
resulting in the 4th grade polynomial regression and a flat section already at $h^{(3)} = 0.25  \cdot 2^{-3} = 1.875$min.
If compared with our minimum time step
of $h^{(7)} = 0.25  \cdot 2^{-7} = 7$sec, the resulting CPU time of the MCMC is more than 90\% larger.  However,
the resulting posterior distributions for $h^{(3)}$ and $h^{(7)}$ are basically identical (see Figure~\ref{fig.ExaDiabMinMod}(c)).
The estimated marginal is $P_{\bY}^{h^{(3)}}(\by) \approx 3\cdot 10^{-14}$, calculated using the MCMC samples and
(\ref{eqn.estMarg}), while $P_{\bY}^{h^{(7)}}(\by) \approx 3.2\cdot 10^{-14}$
calculated using numerical integration (red line in Figure~\ref{fig.ExaDiabMinMod}(a)).

\section{Discussion}\label{sec:discussion}

We advance on some theoretical aspects of the Bayesian analysis of ODE systems.  As opposed to more standard
(Bayesian) statistical analyses, Inverse Problems present the added difficulty that the regressor function is not
analytically tractable and numerical approximations need to be used.  In general, the replacement of the theoretical (non-available) solution of the differential system by a numerical approximation is ignored and the solver being used as a black box.
However, recently, research has been directed at trying to quantify the consequences of such an approximation,
commonly by comparing expected values of the resulting posterior distributions,
like the exact vs the numerical Posterior means. 

In this paper we adopt a different approach, basing our comparison on the use of Bayes Factors,
which is the natural tool to comparing models in a Bayesian context. 
There are still some particular issues to be solve when applying our results to more realistic inverse problems like
estimating the marginals in a multidimensional parameter problem and analyzing stiff problems were a multistep
method would need to be used.  However, we may highlight the following remarks. 

First, we contribute to the intuitive idea that the ODE solver
approximation error should be put in the perspective of the observational error.  Bayes Factors, and the
Bayesian model comparison machinery, can be used as an appropriate measure of the solver accuracy,
precisely in the perspective of the observational error considered in the model.  Result~\ref{prop:2} establishes
a consistency in order accuracy for the solver and for the posterior distribution, considering BF's. As a consequence,
numerical solver precision may be viewed in this perspective and not solely as a black box regressor.
As far as the main aim is to make inference on parameters, there is no need to use to highest precision
if the data are contaminated by a non-neglectable quantity of noise. In a domain where the computational
time is important,  we have proved that considerable time savings can be done, only by using a reasonable step size in the solver. 

Secondly, we show how the BF may be approximated even in this scenario where the exact model is not
available.  This result is of particular interest, since it allows to compare the accuaracy of our approximate posterior
without being able to work on the theoretical model directly.
The computation of marginal likelihoods is an important topic in the Bayesian literature. In this paper,
we propose the use of the Gelfand and Dey's estimator, which has the great advantage of not requiring any
additional numerical evaluation of the differential system, after the MCMC was performed. However, we are aware that
the Gelfand and Dey's estimator may be highly unstable when the dimension of the parameters increases.
The use of a Kernel Density Estimate weighting
function in (\ref{eqn.estMarg}) can help to stabilize the estimate but is not a universal solution.
If the dimension of the parameters increases, other strategies should be considered, still keeping in mind that any
additional numerical evaluation of the ODE system may have a considerable computational cost. 
Our results would also need to be stated for multiple dimension observation functions $f$; we leave
these considerations for future research.

\section{Acknowledgments}

We thank Dr Silvia Quintana for kindly providing the OGTT data for example in~\ref{sec:exa2}.
MAC and JAC would like to acknowledge financial support 
from Fondo Mixto de Fomento a la Investigaci\'on Cient\'{\i}fica y Tecnol\'ogica, 
CONACYT-Gobierno del Estado de Guajanuato, GTO-2011-C04-168776.

\bibliographystyle{gSCS.bst}
\bibliography{BayesInvProbReview}

\end{document}